

\documentclass[preprint,12pt]{article}




\usepackage{amssymb}

\usepackage{amsmath,latexsym,eufrak,euscript}
\usepackage{subfigure,pstricks,pst-node,pst-coil}

\usepackage{url,tikz}
\usepackage{pgflibrarysnakes}

\usepackage{multicol}

\usetikzlibrary{arrows}
\usetikzlibrary{automata}
\usetikzlibrary{snakes}
\usetikzlibrary{shapes}



\def \sp {\hspace*{0.8cm}}

\def \A {\mathcal{A}}

\def \C {\mathfrak{C}}
\def \N {\mathfrak{N}}

\def \P {\mathbb{P}}

\def \X {\mathfrak{X}}
\def \NFA {\mathfrak{A}}

\def \Aut {\mathrm{Aut}}
\def \ChI {\mathsf{Ch_{init}}}
\def \ChF {\mathsf{Ch_{final}}}
\def \ChT {\mathsf{Ch_{trans.}}}
\def \S {S_{\rho_1,\rho_2,\rho_3}}
\def \Sr {S_{\rho}}

\def \endofproof {\hfill $\square$}
\newcommand {\tv}[1] {\lVert #1 \rVert_{\text{\em {\tiny TV}}}}

\newenvironment{proof}
{ \noindent {\sc Proof.\/}  }
{}

\newtheorem{theorem}{Theorem}
\newtheorem{lemma}[theorem]{Lemma}
\newtheorem{proposition}[theorem]{Proposition}

\title{On the Uniform Random Generation of Non deterministic
  Automata up to Isomorphism}
\date{}
\author{Pierre-Cyrille H\'eam and Jean-Luc Joly}
\begin{document}


\maketitle

\begin{center}
FEMTO-ST  - Universit\'e de Franche-Comt\'e -  CNRS UMR 6174 - INRIA CASSIS
\end{center}
\begin{abstract}
In this paper we address the problem of the uniform random generation of non
deterministic automata (NFA) up to isomorphism. First, we show how to use a
Monte-Carlo approach to uniformly sample a NFA. 
Secondly, we show how to use the Metropolis-Hastings Algorithm to  uniformly
generate  NFAs up to isomorphism. Using
labeling techniques, we show that in practice it is possible to move into the
modified Markov Chain efficiently, allowing the random generation of NFAs up
to isomorphism with dozens of states. This general approach is also applied
to several interesting subclasses of NFAs (up to isomorphism), such as NFAs
having a unique initial states and a bounded output degree. Finally, we
prove that for these interesting subclasses of NFAs, moving into the
Metropolis Markov chain can be done in polynomial time. Promising
experimental results constitute a practical  contribution.
\end{abstract}


\section{Introduction}\label{sec:intro}

Finite automata play a central role in the field of formal language
theory and are intensively used to address algorithmic problems from
model-checking to text processing. Many automata-based algorithms have been
developed and are still  being developed. They propose new approaches and heuristics,
even for basic problems like the inclusion
problem\footnote{see \url{http://www.languageinclusion.org/}}.
Evaluating new algorithms is a challenging problem that cannot be addressed
only by the theoretical computation of the worst case complexity. Several
other complementary techniques can be used to measure the efficiency of an
algorithm: average complexity, generic case complexity, benchmarking,
evaluation on hard instances, evaluations on random instances. The first two
approaches are hard theoretical problems, particularly for algorithms
using heuristics and optmizations. Benchmarks, as well as known hard instances, are not always available. 
Nevertheless, in practice, random generation
of inputs is a good way to estimate the efficiency of an algorithm.
Designing uniform random generator for classes of finite automata is a
challenging problem that has been addressed mostly for deterministic
automata~\cite{DBLP:journals/tcs/ChamparnaudP05,DBLP:journals/tcs/BassinoN07,DBLP:journals/tcs/AlmeidaMR07,DBLP:conf/stacs/CarayolN12}
 -the interested reader is referred to~\cite{DBLP:conf/mfcs/Nicaud14} for a
recent survey. However, the problem of uniform random generation of non
deterministic automata (NFAs) is more complex, particularly for a random generation
up to isomorphism: the size of the automorphism group of a $n$-state non
deterministic automata may vary from $1$ to $n!$. 
For most applications, the complexity of the algorithm is related to the
structure of the automata, not to the names of the states:
randomly generated NFAs, regardless of the number of isomorphic automata, may therefore 
lead to an over representation of some isomorphism classes of automata.
Moreover, as discussed in the conclusion of~\cite{DBLP:conf/mfcs/Nicaud14},
the random generation of non deterministic automata has to be done on particular
subclasses of automata in order to obtain a better sampler for the
evaluation of algorithm (since most of the NFAs, for the uniform distribution,
will accept all words).  

The random generation of non deterministic automata is explored
in~\cite{DBLP:conf/lpar/TabakovV05} using random graph techniques (without
considering the obtained distribution relative to automata or to the
isomorphism classes). In~\cite{DBLP:conf/dcfs/ChamparnaudHPZ02}, the random
generation of NFAs is performed using bitstream generation.
In~\cite{DBLP:conf/lata/Nicaud09,DBLP:conf/fsttcs/NicaudPR10} NFAs are
obtained by the random generation of a regular expression and by
transforming it into an equivalent automaton using Glushkov Algorithm. The
use of Markov chains based techniques to randomly generate finite automata
was introduced
in~\cite{DBLP:conf/wia/CarninoF11,DBLP:journals/tcs/CarninoF12} for acyclic
automata.

\subsection{Contributions}
In this paper we address the problem of the uniform generation of elements
in several classes of non deterministic automata (up to isomorphism) by
using Monte-Carlo techniques. We propose this approach for the class of
$n$-state non deterministic automata as well as for (a priori) more
interesting sub-classes. Determining the most interesting subclasses of NFAs
for testing practical applications is not the purpose of this paper. We
would like to point out that Monte Carlo approaches are very flexible and
that the results of this paper can be applied-adapted quite easily for many
classes of NFAs. More precisely:
\begin{enumerate}
\item We propose in Section~\ref{sec:random} several ergodic Markov Chains
whose stationary distributions are respectively uniform on the set of
$n$-state NFAs, $n$-state NFAs with a fixed maximal output degree and
$n$-state NFAs with a fixed maximal output degree for each letter. In
addition, these chains can be adapted for these three classes restricted to
automata with a fixed single initial state. Moving into these Markov chains
can be done in time polynomial in $n$.
\item The main idea of this paper is exposed in
Section~\ref{sec:algo-metro}, where we show how to modify these Markov
Chains using the Metropolis-Hastings Algorithm in order to obtain stationary
distributions that are uniform for the given classes of automata up to
isomorphism. Moving into these new Markov chains requires to compute the sizes
of the automorphism groups of the occurring NFAs, as explained in Section~\ref{sec:randomup}. 

\item The main contributions of this paper are given in
Section~\ref{sec:poly} and in Section~\ref{sec:label}, which can be
red independently. In Section~\ref{sec:poly}, we show a theoretical result for
the classes with a bounded output degree: moving into the modified Markov chains
(for a generation up to isomorphism) can be done in polynomial time. 

\item 
In Section~\ref{sec:label}, we explain how to use {\it labelings}, a classical
graph technique, to compute in practice the sizes of the automorphism
groups. The efficiency of the approach is illustrated with promising
experiments in Section~\ref{sec:xp}.
\end{enumerate}



\subsection{Theoretical Background on NFA}\label{sec:theoryNFA}

For a general reference on finite automata see~\cite{Hopcroft}.  In
this paper $\Sigma$ is a fixed finite alphabet of cardinal
$|\Sigma|\geq 2$, and $m$ is an integer satisfying $m\geq 2$.

A {\it non-deterministic automaton} (NFA) on $\Sigma$ is a tuple $(Q,
\Delta,I,F)$ where $Q$ is a finite set of {\it states}, $\Sigma$ is a finite
alphabet, $\Delta\subseteq Q\times \Sigma \times Q$ is the set of
transitions, $F\subseteq Q$ is the set of final states and $I\subseteq
Q$ is the set of initial states.  For any state $p$ and any letter
$a$, we denote by $p\cdot a$ the set of states $q$ such that
$(p,a,q)\in\Delta$.  The set of transitions $\Delta$ is {\it
deterministic} if for every pair $(p,a)$ in $Q\times\Sigma$ there is
at most one $q\in Q$ such that $(p,a,q)\in \Delta$. Two NFAs are
depicted on Fig.~\ref{fig:isom}.  A NFA is {\it complete} if for every
pair $(p,a)$ in $Q\times\Sigma$ there is at least one $q\in Q$ such
that $(p,a,q)\in \Delta$.  A {\it path} in a NFA is a sequence of
transitions $(p_0,a_0,q_0)(p_1,a_1,q_1)\ldots(p_k,a_k,q_k)$ such that
$q_i=p_{i+1}$. The word $a_0\ldots a_k$ is the {\it label} of the path
and $k$ its {\it length}. If $p_0\in I$ and $q_k\in F$ the path is
successful. A word is {\it accepted} by a NFA if it's the label of a
successful path.  A NFA is {\it accessible} (resp.  {\it
co-accessible}) if for every state $q$ there exists a path from an
initial state to $q$ (resp. if for every state $q$ there exists a path
from $q$ to a final state).  A NFA is {\it trim} if it is both
accessible and co-accessible A {\it deterministic automaton} is a NFA
where $|I|=1$ and whose set of transitions is deterministic.

Let $\NFA(n)$ be the class of  finite automata whose set of states is
$\{0,\ldots,n-1\}$. We are interesting in several subclasses of $\NFA(n)$.
\begin{itemize}
\item $\N(n)$ is the subclass of $\NFA(n)$ of trim finite automata.
\item  $\N_m(n)$ be the class of
finite automata in  $\N(n)$ such that, for each state $p$, there is at most $m$
pairs $(a,q)$ such that $(p,a,q)$ is a transition: there are at most $m$
transitions outgoing each state. 
\item $\N^\prime_m(n)$ be the class of
finite automata in  $\N_m(n)$ such that, for each state $p$ and each
letter $a$, there is at most $m$
states $q$ such that $(p,a,q)$ is a transition: for each letter, there are
at most $m$ transitions labeled by this letter  outgoing each state.
\item Finally, for any class $\X$ of finite automata, we denote by $\X^{\bullet}$ the
subclass of $\X$ of automata whose set of initial states is reduced to
$\{1\}$.  
\end{itemize}

Examples of automata in One
has $$\N_m(n)\subseteq \N_m^\prime(n)\subseteq \N(n)\subseteq\NFA(n).$$
these classes are depicted on Fig.~\ref{fig:AutomatesClasses}.

\begin{figure}[!ht]
\centering
\begin{tikzpicture}
\node (1) [state,draw,initial, initial text=,fill=black!20] at (0,0) {$1$} ;
\node (2) [state,draw,fill=black!20] at (3,0) {$2$} ;
\node (3) [state,accepting,fill=black!20] at (6,0) {$3$} ;

\path[->,>=latex] (1) edge[above] node {$a$} (2);
\path[->,>=latex] (1) edge[above,bend left] node {$a$} (3);
\path[->,>=latex] (1) edge[loop above] node {$b$} (1);
\path[->,>=latex] (2) edge[above] node {$a,b$} (3);
\path[->,>=latex] (2) edge[below, bend left] node {$a,b$} (1);

\draw (3,-2) node {
\begin{tabular}{c}
Automaton  in $\N(3)^\bullet$, $\N_4(3)^\bullet$, $\N_2'(3)^\bullet$\\
and in   $\N(3)$, $\N_4(3)$, $\N_2'(3)$.
\end{tabular}};







\begin{scope}[xshift=0cm,yshift=-5cm]
\node (1) [state,draw,initial, initial text=,fill=black!20] at (0,0) {$1$} ;
\node (2) [state,draw,fill=black!20] at (3,0) {$2$} ;
\node (3) [state,accepting,fill=black!20,initial, initial
text=,fill=black!20] at (6,0) {$3$} ;

\path[->,>=latex] (1) edge[above] node {$a$} (2);
\path[->,>=latex] (1) edge[loop above] node {$a$} (1);
\path[->,>=latex] (2) edge[above,bend left] node {$b$} (3);
\path[->,>=latex] (2) edge[below, bend left] node {$b$} (1);

\draw (3,-2) node {
\begin{tabular}{c}
Automaton  in $\N(3)$, $\N_2(3)$ and $\N_2'(3)$\\
but not in any doted class.
\end{tabular} };
\end{scope}
\end{tikzpicture}
\caption{Several Classes of Automata.}\label{fig:AutomatesClasses}
\end{figure}


Two NFAs are {\it isomorphic} if there exists a 
bijection between their sets of
states preserving the sets initial states, final states and transitions. More precisely,
let $\A=(Q,\Sigma,\Delta,I,F)$ and let $\varphi$ be a bijection from $Q$
into a finite set $\varphi(Q)$. We denote by $\varphi(\A)$ the automaton
$(\varphi(Q),\Sigma,\Delta^\prime,\varphi(I),\varphi(F))$, with 
$\Delta^\prime=\{(\varphi(p),a,\varphi(q))\mid (p,a,q)\in \Delta\}$. 
Two automata $\A_1$ and $\A_2$ are isomorphic if there exists a bijection
$\varphi$ such that $\varphi(\A_1)=\A_2$. 

\begin{figure}
\begin{center}
\begin{tikzpicture}[scale=1.2]
\node [circle,draw,initial,initial text=,fill=black!20](1) at (-2.5,0) {$1$};
\node [circle,draw,fill=black!20](2) at (0,0) {$2$};
\node [circle,draw,fill=black!20](3) at (0,-2) {$3$};
\node [circle,draw,fill=black!20,accepting](4) at (-2.5,-2) {$4$};

\path[->,>=latex] (1) edge[above] node {$a$} (2);
\path[->,>=latex] (2) edge[bend left,left] node {$b$} (3);
\path[->,>=latex] (3) edge[bend left,left] node {$b$} (2);
\path[->,>=latex] (3) edge[left,above] node {$a$} (4);
\path[->,>=latex] (4) edge[loop below] node {$b$} (4);
\path[->,>=latex] (4) edge[left] node {$a$} (1);

\node [circle,draw,initial,initial text=,fill=black!20](1b) at (2.5,0) {$2$};
\node [circle,draw,fill=black!20](2b) at (5,-2) {$3$};
\node [circle,draw,fill=black!20](3b) at (2.5,-2) {$4$};
\node [circle,draw,fill=black!20,accepting](4b) at (5,-0) {$1$};

\path[->,>=latex] (1b) edge[right] node {$a$} (2b);
\path[->,>=latex] (2b) edge[bend left, below] node {$b$} (3b);
\path[->,>=latex] (3b) edge[bend left, below] node {$b$} (2b);
\path[->,>=latex] (3b) edge[left] node {$a$} (4b);
\path[->,>=latex] (4b) edge[loop below] node {$b$} (4b);
\path[->,>=latex] (4b) edge[above] node {$a$} (1b);

\path[->,>=latex,dashed] (1) edge[above,color=black,bend left] node {} (1b);
\path[->,>=latex,dashed] (2) edge[above,color=black] node {} (2b);
\path[->,>=latex,dashed] (3) edge[above,color=black,bend left] node {} (3b);
\path[->,>=latex,dashed] (4) edge[above,color=black] node {} (4b);

\draw (-1,-3.5) node {$\A_1$};
\draw (3.5,-3.5) node {$\A_2$};

\draw (1,-4.5) node {$\varphi(\A_1)=\A_2,\; \textcolor{black}{\varphi(1)=2,\, \varphi(2)=3,\, \varphi(3)=4,\, \varphi(4)=1}$.};
\end{tikzpicture}




\end{center}
\caption{Two Isomorphic Automata}\label{fig:isom}
\end{figure}

Two isomorphic NFAs have the same number of states and are equal, up to
 the states names. The relation {\it is isomorphic to} is an equivalence relation.
For instance, the two automata depicted on Fig.~\ref{fig:isom} are
isomorphic.
An {\it automorphism} for a NFA is an isomorphism between this NFA and
itself. 
Given a NFA $\A=(Q,\Sigma,\Delta,I,F)$, the set of automorphisms of $\A$ is
a finite group denoted $\Aut(\A)$. For $Q^\prime\subseteq Q$, 
$\Aut_{Q^\prime}(\A)$ denotes the subset of  $\Aut(\A)$ of automorphisms $\phi$
fixing each element of $Q^\prime$:  for each $q\in Q^\prime$, $\phi(q)=q$.
Particularly $\Aut_{\emptyset}(\A)=\Aut(\A)$, and  
$\Aut_{Q}(\A)$ is reduce to the identity.
For instance, the  automorphism group of the 
automaton depicted on Fig.~\ref{fig:isom}(a) has two elements, the identity
and the isomorphism switching $2$ and $3$.

The size of the automorphism group of a non deterministic $n$-state
automaton may vary from $1$ to $n!$. For instance, any deterministic trim
automaton whose states are all final has an  automorphism group reduce to
the identity. The non deterministic $n$-state automaton with no transition
and where all states are both initial and final has for automorphism group
the symmetric group. 

The isomorphism problem consists in deciding whether two finite automata are
isomorphic. It is investigated for deterministic automata on different alphabet
in~\cite{DBLP:journals/siamcomp/Booth78} (with a different definition of
isomorphism). It is naturally closed to the same problem for directed graph
and the following result~\cite{DBLP:journals/jcss/Luks82} will be useful in
this paper.
\begin{theorem}\label{thm:luks}
Let $m$ be a fixed positive integer. The isomorphism problem for directed
graphs with degree bounded by $m$ is polynomial.
\end{theorem}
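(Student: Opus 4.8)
The plan is to treat this as a problem about permutation groups rather than about graphs directly, following the group-theoretic divide-and-conquer that is standard for bounded-degree isomorphism. First I would reduce testing isomorphism of two graphs $X,Y$ to computing, by generators, the group of automorphisms (or an isomorphism coset) of a single graph: $X$ and $Y$ are isomorphic iff the disjoint union $X\sqcup Y$ admits an automorphism exchanging the two components, so it suffices to handle the automorphism/isomorphism-coset computation. I would further restrict to connected graphs (treating components separately) and fix one directed edge $e=(u,v)$. The key structural observation is that for a connected graph of degree at most $m$, the stabiliser of $e$ in the automorphism group, acting on the vertex set layered by breadth-first distance from $e$, lies in the class $\Gamma_{m-1}$, where $\Gamma_d$ denotes the class of groups all of whose composition factors embed in a symmetric group on at most $d$ points. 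Indeed, fixing a vertex forces any automorphism to permute its at most $m$ neighbours, and propagating this layer by layer yields a subnormal tower whose successive quotients embed in products of such symmetric groups.

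This reduces the problem to the \emph{string} (or colour) isomorphism problem for $\Gamma_d$-groups: given $G\le\mathrm{Sym}(\Omega)$ by a generating set, with $G\in\Gamma_d$, and two colourings $x,y\colon\Omega\to C$, compute the set $\{g\in G\mid x^g=y\}$, which is either empty or a coset of $\{g\in G\mid x^g=x\}$. All of the elementary permutation-group bookkeeping I would need -- orbits, block systems, membership testing, group orders, and generators for point and block-system stabilisers -- is available in polynomial time through the Schreier--Sims machinery, so I may use it freely.

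The algorithm itself is a recursion on the pair $(\Omega,G)$. If $G$ is intransitive, I would process its orbits one at a time: solve the problem on the first orbit, lift the resulting coset back to $G$, and recurse inside the pointwise stabiliser of that orbit on the remaining points, gluing the partial solutions together -- crucially this stays inside the $\Gamma_d$ world and avoids any appeal to a general coset-intersection routine. If $G$ is transitive, I would choose a block system $\mathcal{B}$ whose blocks are maximal among the nontrivial ones, so that the induced action $\bar G$ on the blocks is primitive; this $\bar G$ again lies in $\Gamma_d$. Here the bounded-degree hypothesis does its essential work: by the Babai--Cameron--P\'alfy bound, a primitive group in $\Gamma_d$ on $k$ points has order at most $k^{f(d)}$, which is polynomial for fixed $d$. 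I can therefore afford to branch over the polynomially many block-permutations in $\bar G$ and, for each, recurse inside the kernel of the block action, whose orbits sit within individual blocks and are hence strictly smaller. Since both the number of blocks and the block size drop, the recursion tree stays of polynomial size.

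The main obstacle, and the place where all the real content sits, is precisely this transitive step: without $\Gamma_d$-membership one would have to branch over an exponentially large primitive quotient, and general tools such as setwise stabiliser or coset intersection are not known to run in polynomial time. The bounded-degree assumption is exactly what confines every group arising in the recursion to $\Gamma_d$ and thereby keeps the primitive quotients small via Babai--Cameron--P\'alfy; verifying that the recursion respects $\Gamma_d$ at each level and bounding the accumulated cost is the technical heart of the argument. Once this is in place, unwinding the recursion yields generators for the isomorphism coset, and hence a polynomial-time decision procedure for the isomorphism problem of directed graphs of degree bounded by $m$.
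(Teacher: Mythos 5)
The paper offers no proof of this statement: Theorem~\ref{thm:luks} is imported as a black box from Luks's 1982 paper and is only ever \emph{used} (as the engine behind Theorem~\ref{thm-isom}, via the degree-preserving encoding $\A\mapsto G_\A$). So the comparison is really with Luks's own argument, and your outline is a faithful reconstruction of it in its standard modern form: reduce isomorphism to an automorphism/coset computation via the disjoint union, observe that the stabiliser of a directed edge in a connected graph of degree at most $m$ lies in $\Gamma_{m-1}$ by propagating along breadth-first layers, reduce to string isomorphism for $\Gamma_d$-groups, and run the orbit/block divide-and-conquer with the Babai--Cameron--P\'alfy bound keeping the primitive quotients polynomially small. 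Two caveats are worth recording. First, this is an outline rather than a proof: the part you yourself identify as ``the technical heart'' --- that every group arising in the recursion remains in $\Gamma_d$ and that the recursion tree has polynomial size --- is asserted, not carried out, and that is where essentially all of the content of Luks's paper lives. Second, in the intransitive case the group with which you recurse on the remaining orbits is not the pointwise stabiliser of the first orbit but the stabiliser in $G$ of the partial string (the subgroup preserving the colouring restricted to that orbit), carried along with the coset representative already computed; recursing in the pointwise stabiliser would discard admissible isomorphisms. Neither caveat changes the verdict that your route is the correct and standard one; a full write-up would amount to reproducing Luks's argument, which is presumably why the authors cite it rather than prove it.
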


\subsection{Theoretical Background on Markov Chains}\label{sec:theoryMC}

For a general reference on Markov Chains see~\cite{mixing}. Basic
probability notions will not be defined in this paper. The reader is
referred for instance to~\cite{proba-and-computing}.

Let $\Omega$ be a finite set. A {\it Markov chain} on $\Omega$ is a sequence
$X_0,\ldots,X_t,\ldots$ of random variables on $\Omega$ such that
$\P(X_{t+1}=x_{t+1}\mid X_t=x_t)=\P(X_{t+1}=x_{t+1}\mid
X_t=x_t,\ldots,X_i=x_i,\ldots,X_0=x_0),$ for all $x_i\in\Omega$. A Markov
chain is defined by its {\it transition matrix} $M$, which is a function
from $\Omega\times\Omega$ into $[0,1]$ satisfying $M(x,y)=\P(X_{t+1}=y \mid
X_t=x)$. The underlying graph of a Markov chain is the graph whose set of
vertices is $\Omega$ and there is an edge from $x$ to $y$ if $M(x,y)\neq 0$.
A Markov chain is {\it irreducible} if its underlying graph is strongly
connected. It is {\it aperiodic} if for all node $x$, the gcd of the lengths
of all cycles 
visiting $x$ is $1$. Particularly, if for each $x$, $M(x,x)\neq
0$, the Markov chain is aperiodic. A Markov chain is {\it ergodic} if it is
irreducible and aperiodic. A Markov chain is symmetric if $M(x,y)=M(y,x)$
for all $x,y\in\Omega$. A distribution $\pi$ on $\Omega$ is a stationary
distribution for the Markov Chain if $\pi M =\pi$. It is known that an
ergodic Markov chain has a unique stationary
distribution~\cite[Chapter~1]{mixing}. Moreover, if the chain is symmetric,
this distribution is the uniform distribution on $\Omega$.

Given an ergodic Markov chain $X_0,\ldots,X_t,\ldots$ with stationary
distribution $\pi$, it is known that, whatever is the value of $X_0$, the
distribution of $X_t$ converges to $\pi$ when $t\to +\infty$:
$\max \tv{M^t(x,\cdot)-\pi}\underset{t\to +\infty}{\to} 0,$ where $\tv{}$
designates the total variation distance between two
distributions~\cite[Chapter~4]{mixing}. This leads to the Monte-Carlo
technique to randomly generate elements of $\Omega$ according to the
distribution $\pi$ by choosing arbitrarily $X_0$, computing $X_1,X_2,\ldots$, and
returning $X_t$ for $t$ large enough. The convergence rate is known to be
exponential, but computing the constants is a very difficult problem: choosing
the step $t$ to stop is a challenging question depending both on how close to
$\pi$ we want to be and on the convergence rate of $M^t(x,\cdot)$ to $\pi$.
For this purpose, the $\varepsilon$-{\it mixing time} of an ergodic Markov
chain of matrix $M$ and stationary distribution $\pi$ is defined by $t_{\rm
mix}(\varepsilon)= \min \{t\mid \max_{x\in \Omega}\tv{P_t(x,\cdot)-\pi}\leq \varepsilon\}.$
Computing mixing time bounds is a central question on Markov Chains.

The Metropolis-Hasting Algorithm is based on the Monte-Carlo technique and
aims at modifying the transition matrix of the Markov chain in order to
obtain a particular stationary distribution~\cite[Chapter 3]{mixing}.
Suppose that $M$ is an ergodic symmetric transition matrix of a symmetric
Markov chain on $\Omega$ and $\nu$ is a distribution on $\Omega$. The
transition matrix $P_\nu$ for $\nu$ is defined by:

$$
P_\nu(x,y)=
\left\{
\begin{array}{ll}
\min\left\{1,\frac{\nu(y)}{\nu(x)}\right\} M(x,y)& \text{if } x\neq y,\\
1-\sum_{z\neq x}\min\left\{1,\frac{\nu(z)}{\nu(x)}\right\} M(x,z)& \text{if } x= y.
\end{array}\right. 
$$

The chain defined by $P_\nu$ is called {\it the Metropolis Chain for $\nu$}.
It is known~\cite[Chapter 3]{mixing} that it is an ergodic Markov chain
whose stationary distribution is~$\nu$.

\section{Random Generation of Non Deterministic Automata using Markov Chain}\label{sec:random}

In this section, we propose families of symmetric ergodic Markov
chains on $\NFA(n)$, $\N(n)$, $\N_m(n)$ and $\N_m^\prime(n)$, as well
as on the respective corresponding doted classes of NFAs. The movement of
theses Markov chains are depicted in Fig.~\ref{fig:NFAMarkov}.

\begin{figure}[!ht]
\begin{center}
\begin{tikzpicture}[scale=1.2]

\node [circle,minimum width=2.5cm,draw,fill=red!10](A1) at (0,0) {};
\node [rectangle,draw,fill=white] (1a) at (-0.6,-0.2){$\A$};
\node [circle,draw,fill=white,accepting] (2a) at (0.5,-0.3){{\tiny $q$}};
\node [circle,draw,fill=white] (3a) at (0,0.5){{\tiny $p$}};
\node (fl) at (-.1,-.8) {};

\path[->,>=latex]  (fl) edge[] (2a);
\path[->,>=latex] (1a) edge[left] node {{\tiny $a$}} (3a);
\path[->,>=latex] (1a) edge[below] node {{\tiny $b$}} (2a);
\path[->,>=latex] (3a) edge[left] node {{\tiny $a$}} (2a);

\draw[->,>=latex] (A1) to [out=300,in=270,looseness=15] (A1);

\begin{scope}[rotate=-50]
\node [circle,minimum width=2.5cm,draw,fill=blue!10](A2) at (4,0) {};
\node [rectangle,draw,fill=white] (1b) at (3.4,-0.2){$\A$};
\node [circle,draw,fill=white,accepting] (2b) at (4.5,-0.3){{\tiny $q$}};
\node [circle,draw,fill=white] (3b) at (4,0.5){{\tiny $p$}};

\node (flb) at (3.9,-.8) {};

\path[->,>=latex]  (flb) edge[] (2b);

\path[->,>=latex] (1b) edge[above] node {{\tiny $a$}} (3b);
\path[->,>=latex] (1b) edge[left] node {{\tiny $b$}} (2b);
\path[->,>=latex] (3b) edge[left] node {{\tiny $a$}} (2b);
\path[->,>=latex] (3b) edge[right,bend left,color=red] node {{\tiny $b$}} (2b);

\path[->,>=latex] (A1) edge[right] node {$\frac{\rho_3}{|\Sigma|.n^2}$} (A2);
\end{scope}
\begin{scope}[rotate=50]
\node [circle,minimum width=2.5cm,draw,fill=blue!10](A2) at (4,0) {};
\node [rectangle,draw,fill=white] (1b) at (3.4,-0.2){$\A$};
\node [circle,draw,fill=white,accepting] (2b) at (4.5,-0.3){{\tiny $q$}};
\node [circle,draw,fill=white] (3b) at (4,0.5){{\tiny $p$}};

\node (fp) at (3.3,0.7) {};
\path[->,>=latex,red]  (fp) edge[] (3b);

\node (flc) at (3.9,-.8) {};
\path[->,>=latex]  (flc) edge[] (2b);

\path[->,>=latex] (1b) edge[left] node {{\tiny $a$}} (3b);
\path[->,>=latex] (1b) edge[below] node {{\tiny $b$}} (2b);
\path[->,>=latex] (3b) edge[above] node {{\tiny $a$}} (2b);

\path[->,>=latex] (A1) edge[left] node {$\frac{\rho_1}{|Q|}$} (A2);
\end{scope}

\begin{scope}[rotate=110]
\node [circle,minimum width=2.5cm,draw,fill=blue!10](A2) at (4,0) {};
\node [rectangle,draw,fill=white] (1b) at (3.4,-0.2){$\A$};
\node [circle,draw,fill=white,accepting] (2b) at (4.5,-0.3){{\tiny $q$}};
\node [circle,draw,fill=white] (3b) at (4,0.5){{\tiny $p$}};

\path[->,>=latex] (1b) edge[below] node {{\tiny $a$}} (3b);
\path[->,>=latex] (1b) edge[right] node {{\tiny $b$}} (2b);
\path[->,>=latex] (3b) edge[left] node {{\tiny $a$}} (2b);

\path[->,>=latex] (A1) edge[left] node {$\frac{\rho_1}{|Q|}$} (A2);
\end{scope}

\begin{scope}[rotate=170]
\node [circle,minimum width=2.5cm,draw,fill=blue!10](A2) at (4,0) {};
\node [rectangle,draw,fill=white] (1b) at (3.4,-0.2){$\A$};
\node [circle,draw,fill=white,accepting] (2b) at (4.5,-0.3){{\tiny $q$}};
\node [circle,draw=red,fill=white,accepting] (3b) at (4,0.5){{\tiny $p$}};

\node (flc) at (3.9,-.8) {};
\path[->,>=latex]  (flc) edge[] (2b);

\path[->,>=latex] (1b) edge[right] node {{\tiny $a$}} (3b);
\path[->,>=latex] (1b) edge[above] node {{\tiny $b$}} (2b);
\path[->,>=latex] (3b) edge[left] node {{\tiny $a$}} (2b);

\path[->,>=latex] (A1) edge[below] node {$\frac{\rho_2}{|Q|}$} (A2);
\end{scope}

\begin{scope}[rotate=-130]
\node [circle,minimum width=2.5cm,draw,fill=blue!10](A2) at (4,0) {};
\node [rectangle,draw,fill=white] (1b) at (3.4,-0.2){$\A$};
\node [circle,draw,fill=white] (2b) at (4.5,-0.3){{\tiny $q$}};
\node [circle,draw,fill=white] (3b) at (4,0.5){{\tiny $p$}};

\node (flc) at (3.9,-.8) {};
\path[->,>=latex]  (flc) edge[] (2b);

\path[->,>=latex] (1b) edge[right] node {{\tiny $a$}} (3b);
\path[->,>=latex] (1b) edge[above] node {{\tiny $b$}} (2b);
\path[->,>=latex] (3b) edge[below] node {{\tiny $a$}} (2b);

\path[->,>=latex] (A1) edge[right] node {$\frac{\rho_2}{|Q|}$} (A2);
\end{scope}

\node [circle,minimum width=2.5cm,draw,fill=blue!10](A2) at (4,0) {};
\node [rectangle,draw,fill=white] (1b) at (3.4,-0.2){$\A$};
\node [circle,draw,fill=white,accepting] (2b) at (4.5,-0.3){{\tiny $q$}};
\node [circle,draw,fill=white] (3b) at (4,0.5){{\tiny $p$}};
\path[->,>=latex] (1b) edge[left] node {{\tiny $a$}} (3b);
\path[->,>=latex] (3b) edge[left] node {{\tiny $a$}} (2b);

\node (flc) at (3.9,-.8) {};
\path[->,>=latex]  (flc) edge[] (2b);

\path[->,>=latex] (A1) edge[above] node {$\frac{\rho_3}{|\Sigma|.n^2}$} (A2);

\begin{scope}[rotate=25]
\node (A3) at (5,0) {};
\path[->,>=latex,dashed] (A1) edge[] (A3);
\end{scope}
\begin{scope}[rotate=80]
\node (A3) at (5,0) {};
\path[->,>=latex,dashed] (A1) edge[] (A3);
\end{scope}
\begin{scope}[rotate=140]
\node (A3) at (5,0) {};
\path[->,>=latex,dashed] (A1) edge[] (A3);
\end{scope}
\begin{scope}[rotate=-105]
\node (A3) at (5,0) {};
\path[->,>=latex,dashed] (A1) edge[] (A3);
\end{scope}
\begin{scope}[rotate=205]
\node (A3) at (5,0) {};
\path[->,>=latex,dashed] (A1) edge[] (A3);
\end{scope}

\node [rectangle,draw,fill=white] (bbb) at (-3,-6){$\A$};
\draw (2,-6) node  {is used to describe the unmodified part of the automaton.};
\end{tikzpicture}
\end{center}
\caption{Moves into the Markov Chain. The current state of the Markov Chain
is in the center and, around, typical moves. Of course, other moves are
possible on other parts of the automaton, which is represented by the dashed arrows.\label{fig:NFAMarkov}}
\end{figure}

Let $\A=(Q,\Sigma,\Delta,I,F)$ be a finite automaton. 
For any $q$ in $Q$ and any $(p,a,q)$ in $Q\times\Sigma\times Q$, the
automata $\ChI(\A,q)$, $\ChF(\A,q)$ and $\ChT(\A,(p,a,q))$ are defined
as follows:
\begin{itemize}
\item 
If $q\in I$,  then $\ChI(\A,q)=(Q,\Sigma,\Delta,I\setminus\{q\},F)$ and
$\ChI(\A,q)=(Q,\Sigma,\Delta,I\cup\{q\},F)$ otherwise.

\item 
If $q\in F$, then $\ChF(\A,q)=(Q,\Sigma,\Delta,I,F\setminus\{q\})$, and
$\ChF(\A,q)=(Q,\Sigma,\Delta,I,F\cup\{q\})$ otherwise.

\item 
 If $(p,a,q)\in\Delta$, then
$\ChT(\A,(p,a,q))=(Q,\Sigma,\Delta\setminus\{(p,a,q)\},I,F)$, and
$\ChT(\A,(p,a,q))=(Q,\Sigma,\Delta\cup\{(p,a,q)\},I,F)$ otherwise.
\end{itemize}
 Let $\rho_1$, $\rho_2$, $\rho_3$ be three real numbers
satisfying $0\leq \rho_i\leq 1$ and $\rho_1+\rho_2+\rho_3\leq 1$. Let $\X$
be a class of automata whose set of states is $Q$. 
We define the transition matrix $\S^\X(x,y)$ on $\X$ by:
\begin{itemize}
\item If there exists $q$ such that 
$y=\ChI(x,q)$, then $\S^\X(x,y)=\frac{\rho_1}{|Q|}$.
\item If there exists $q$ such that 
$y=\ChF(x,q)$, then $\S^\X(x,y)=\frac{\rho_2}{|Q|}$.
\item If there exists $(p,a,q)\in Q\times\Sigma\times Q$ such that 
$y=\ChT(x,q)$, then $\S^\X(x,y)=\frac{\rho_3}{|\Sigma|.|Q|^2}$.
\item If $y$ is different of $x$ and has not one of the above forms, 
$\S^\X(x,y)=0$.
\item $\S^\X(x,x)=1-\sum_{y\neq x} \S^\X(x,y)$.
\end{itemize}

Now for $\X\in \{\N(n),\N_m(n),\N_m^\prime(n)\}$, and $0< \rho<1$ we define
 the transition matrix $S^{\X^{\bullet}}_\rho$ on $\X^{\bullet}$ by
 $S^{\X^{\bullet}}_\rho=S^\X_{0,\rho,1-\rho}$.

\begin{lemma}\label{lm:irreductible}
Let $m,n$ be fixed positive integers, with $m\geq 2$.
If $1>\rho >0$, $\rho_1>0$, $\rho_2 >0$ and $\rho_3>0$, then  $\S^{\N(n)}$,
$\S^{\N_m(n)}$ and $\S^{\N_m^\prime(n)}$ are irreducible, as well as   $\Sr^{\N(n)^{\bullet}}$,
$\Sr^{\N_m(n)^{\bullet}}$ and $\Sr^{\N_m^\prime(n)^{\bullet}}$. 
\end{lemma}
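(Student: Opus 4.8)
The plan is to prove irreducibility by showing that the underlying graph of each chain is connected; strong connectivity then follows from a symmetry observation. First I would note that every elementary move $\ChI$, $\ChF$, $\ChT$ is an involution: applying $\ChT(\cdot,(p,a,q))$ twice (resp. $\ChI$, $\ChF$) returns the original automaton. Hence, whenever $y$ is obtained from $x$ by one move and both lie in the class $\X$ under consideration, $x$ is obtained from $y$ by the same move, so $\S^\X(x,y)>0$ if and only if $\S^\X(y,x)>0$ (and likewise for $\Sr^{\X^\bullet}$). Consequently the underlying graph is undirected, and it suffices to exhibit, for a single fixed ``hub'' automaton $\A_0\in\X$, a sequence of positive-probability moves joining every $\A\in\X$ to $\A_0$; reachability is then symmetric and the graph is strongly connected.

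For the three classes without a fixed initial state I would take as hub the automaton $\A_0$ in which every state is both initial and final and which has no transition. The key point is that once all states are initial and final the trimness constraint becomes vacuous: each state is accessible (it is initial) and co-accessible (it is final), whatever $\Delta$ is. So from an arbitrary $\A=(Q,\Sigma,\Delta,I,F)$ I would first add, one at a time, every missing initial state (each $\ChI$ move only improves accessibility, leaves co-accessibility and all out-degrees untouched, hence stays in $\X$), then every missing final state (symmetrically), and finally delete every transition one by one. During this last phase all states are initial and final, so every intermediate automaton is trim, and deleting transitions can only decrease the out-degrees; thus the whole walk stays inside $\N(n)$, $\N_m(n)$ or $\N_m^\prime(n)$. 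This settles the three non-doted chains.

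For the doted chains the set of initial states is frozen to $\{1\}$ (formally $\rho_1=0$), so accessibility can no longer be trivialised and must be preserved throughout. Here I would take as hub the canonical path $H$ that starts at the initial state $1$, visits all states in a fixed order reading a fixed letter $a_0$, and has every state final; $H$ is trim and has out-degree at most $1$, hence lies in $\N(n)^\bullet$, $\N_m(n)^\bullet$ and $\N_m^\prime(n)^\bullet$ (using $m\geq 2$). Starting from $\A$ I would first add all missing final states, so co-accessibility becomes automatic and only accessibility from state $1$ remains to be maintained. For $\N(n)^\bullet$ this is now easy: add the $n-1$ edges of $H$ (there is no degree constraint), which preserves accessibility, then delete every other transition (the path $H$ keeps the automaton accessible), reaching $H$.

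The genuinely delicate cases are $\N_m(n)^\bullet$ and $\N_m^\prime(n)^\bullet$, and this is where I expect the main obstacle: adding a transition is now constrained by the bound $m$, while deleting one risks destroying accessibility, so the two constraints pull in opposite directions. My plan is: (i) since $\A$ is accessible from $1$, select a spanning out-tree $T\subseteq\Delta$ rooted at $1$ and delete all edges outside $T$ one at a time --- each deletion preserves accessibility (the tree still spans) and can only lower out-degrees, so we stay in the class and reach the automaton whose transition set is exactly $T$ (and $T$, being a subset of $\Delta$, inherits the bound $m$, both total and per letter); (ii) transform the out-tree $T$ into $H$ by a sequence of \emph{reparenting} operations, each adding one edge (giving some state a second parent, still accessible) and then removing the old parent edge (back to a spanning out-tree), so that both endpoints of every pair are accessible. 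The crux is to schedule these reparentings so that the cap $m$ is never exceeded at the intermediate, one-edge-heavier automaton: I would build $H$ from the root outward, and before inserting the path edge out of a saturated state first drain one of its children further down the still-unprocessed part of the tree, the slack $m\geq 2$ guaranteeing a free out-slot; a potential argument (the total excess of out-degrees above the target value $1$ strictly decreases) then shows the procedure terminates at $H$. Proving that a legal reparenting target always exists under the bound $m\geq 2$, simultaneously for the total-degree class $\N_m(n)^\bullet$ and the per-letter class $\N_m^\prime(n)^\bullet$, is the technical heart of the argument.
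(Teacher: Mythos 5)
Your overall strategy coincides with the paper's: the same symmetry/involution observation reducing irreducibility to connectivity to a hub, the same hub $(Q,\Sigma,\emptyset,Q,Q)$ for the non-doted classes (with the same ``add all initial states, add all final states, delete all transitions'' walk), and for the doted classes the same hub (an $a_0$-labeled Hamiltonian path with all states final), reached by first adding final states and then extracting a spanning out-tree rooted at $1$. Up to that point your argument is complete and matches the paper.

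The gap is exactly where you say it is: for $\N_m(n)^{\bullet}$ and $\N_m^\prime(n)^{\bullet}$ you reduce everything to the claim that a legal reparenting target always exists so that the tree can be reshaped into the path without ever exceeding the degree cap, and you leave that claim unproved (``the technical heart of the argument''). As stated, your root-outward schedule needs a nontrivial counting argument even to show that a node with a free out-slot exists \emph{outside the subtree being moved} (a target inside the subtree would disconnect it), so the proof is not finished. The paper closes this step with a different and much simpler local move that avoids the scheduling problem entirely: as long as the current spanning tree has two leaves $p$ and $q$, add the transition $(p,a_0,q)$ and then delete the unique transition entering $q$. The new edge always leaves a \emph{leaf}, i.e.\ a state of out-degree zero, so neither the total nor the per-letter bound can ever be violated at the intermediate automaton, accessibility of $q$ is preserved via $p$, and the number of leaves strictly decreases; iterating yields a Hamiltonian path $u$ with arbitrary labels and state order, which is then converted to the canonical $a_0$-path by adding all edges $(i,a_0,i+1)$ (raising each out-degree to at most $2\leq m$) and deleting the old ones. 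You should either adopt this leaf-to-leaf move or actually carry out the free-slot counting argument; without one of these, the doted bounded-degree cases are not established.
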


\begin{proof}
Without loss of generality, we assume that $Q=\{1,\ldots,n\}$. Let
$\X\in\{\N(n),\N_m(n),\N_m^\prime(n)\}$ and $x\in \X$. We denote by $\A_0$
the automaton $(Q,\Sigma,\emptyset,Q,Q)$. The automaton $\A_0$ is trim and
is in $\X$. We prove there is  a path in $\X$ from $x$ to $\A_0$. Set
$x=(Q,\Sigma,\Delta,I,F)$. Since adding initial or final states to $x$ provides
automata that are still in $\X$, there is a path from $x$ to
$y=(Q,\Sigma,\Delta,Q,Q)$ (using $\ChI$ and $\ChF$). Now, since all states are
both initial and final, there is a path from $y$ to $\A_0$ (by deleting all
transitions). It follows there is a path in $\X$ from $x$ to $\A_0$. Since
the graph of the Markov chain is symmetric, there is also a path from $\A_0$ to $x$.
Consequently, the Markov chains are irreducible. 

We will now consider the bullet classes. The proof is only done for
$\N_m(n)^{\bullet}$. Proof for others classes are similar. 
Let $a_0$ be an arbitrary letter of $\Sigma$.  Let
$x=(\{1,\ldots,n\},\Sigma,\Delta,\{1\},F)\in \N_m(n)^{\bullet}$. Let
$\A_1$ be the automaton of $\N_m(n)^{\bullet}$ whose set of final
states is $\{1,\ldots,n\}$ and whose set of transitions is
$\{(i,a_0,i+1)\mid 1\leq i < n\}$. We will prove that there is a path
from $x$ to $\A_1$  in the graph of the Markov chain. 

\begin{itemize}
\item By adding final states, there is a path from $x$ to the automaton 
  $y=(\{1,\ldots,n\},\Sigma,\Delta,\{1\},\{1,\ldots,n\})$.

\item Let $\Delta^\prime$ be a subset of $\Delta$ forming a spanning
  tree of $\Delta$ rooted in $1$ (it exists for $y$ is
  accessible). By removing transitions, there is a path from 
$y$ to $z=(\{1,\ldots,n\},\Sigma,\Delta^\prime,\{1\},\{1,\ldots,n\})$.

\item If $z$ there are in $z$ at least two states $p$ and $q$ that
  have no outgoing transition ($p$ and $q$ are leaves of the spanning
  tree), then by adding a transition $(p,a_0,q)$ and by removing the
  unique transition arriving in $q$, we build a path from $z$ to an
  automaton inducing a tree on $,\{1,\ldots,n\}$ and having strictly
  less leaves. By repeating this kind of moves, there is a path from
  $z$ to an automaton of the form 
 $$u=(\{1,\ldots,n\},\Sigma,\{(\varphi(i),a_i,\varphi(i+1))\mid 1\leq i
  <n\},\{1\},\{1,\ldots,n\}),$$ where $\varphi$ is a permutation of
  $\{1,\ldots,n\}$ fixing $1$.

\item Since $m\geq 2$, one can add a transition leaving each
  state. Therefore, by adding each transition of the form
  $(i,a_0,i+1)$. Next, one can remove all transitions that are not of the from
 $(i,a_0,i+1)$, providing a path from $u$ to $\A_1$. 
\end{itemize}

It follows that there is a path from $x$ to $\A_1$ and, since the
graph is symmetric, a path from $\A_1$ to $x$, proving that the Markov
chain is irreducible.~\endofproof
\end{proof}

\begin{lemma}\label{lm:aperiodic}
Let $m,n$ be two fixed positive integers, with $m\geq 2$.
If $1>\rho >0$, $\rho_1>0$, $\rho_2 >0$ and $\rho_3>0$, then $\S^{\N(n)}$,
$\S^{\N_m(n)}$ and $\S^{\N_m^\prime(n)}$ are aperiodic,  as well as   $\Sr^{\N(n)^{\bullet}}$,
$\Sr^{\N_m(n)^{\bullet}}$ and $\Sr^{\N_m^\prime(n)^{\bullet}}$. 
\end{lemma}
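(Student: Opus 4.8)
The plan is to reduce aperiodicity to the existence of a \emph{single} self-loop rather than trying to prove $\S^\X(x,x)\neq 0$ for every state $x$. The uniform version is tempting in view of the sufficient condition recalled in Section~\ref{sec:theoryMC}, but it genuinely fails here: when $\rho_1+\rho_2+\rho_3=1$ there are highly connected automata $x$ (for instance, on two states over a two-letter alphabet, the complete automaton carrying all transitions with $I=F=\{1,2\}$) for which \emph{every} $\ChI$, $\ChF$ and $\ChT$ move stays inside $\N(n)$, so that $\sum_{y\neq x}\S^\X(x,y)=\rho_1+\rho_2+\rho_3=1$ and $\S^\X(x,x)=0$. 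I would therefore fall back on the cycle-length definition of aperiodicity, combined with the irreducibility already established in Lemma~\ref{lm:irreductible}.

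First I would record a counting remark: from a fixed $x$, the candidate targets $\ChI(x,q)$, $\ChF(x,q)$ and $\ChT(x,(p,a,q))$ are pairwise distinct and all different from $x$, since the three operations modify $I$, $F$ and $\Delta$ respectively, and distinct arguments within a family produce distinct automata. Hence $\sum_{y\neq x}\S^\X(x,y)\leq\rho_1+\rho_2+\rho_3$, with \emph{strict} inequality as soon as one candidate target falls outside $\X$. The core step is then to exhibit one state carrying a self-loop. For $\X\in\{\N(n),\N_m(n),\N_m^\prime(n)\}$ I would use the automaton $\A_0=(Q,\Sigma,\emptyset,Q,Q)$ from Lemma~\ref{lm:irreductible}: as $\A_0$ has no transition, removing any state $q$ from the initial set makes $q$ inaccessible, so every move $\ChI(\A_0,q)$ leaves the trim class $\X$. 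This already gives $\sum_{y\neq\A_0}\S^\X(\A_0,y)\leq\rho_2+\rho_3\leq 1-\rho_1<1$, whence $\S^\X(\A_0,\A_0)\geq\rho_1>0$. For the bullet classes $\rho_1=0$, so instead I would take the line automaton $\A_1$ of Lemma~\ref{lm:irreductible} and note that $\ChF(\A_1,n)$ makes the sink state $n$ no longer co-accessible, so this one move leaves $\X^{\bullet}$, yielding $\Sr^{\X^{\bullet}}(\A_1,\A_1)\geq \rho/n>0$.

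Finally I would propagate the self-loop to every state using irreducibility. Fix an arbitrary state $x$ and let $\A_\star$ denote the self-loop state just found; by Lemma~\ref{lm:irreductible} there is a path $P$ from $x$ to $\A_\star$ and a path $P'$ from $\A_\star$ back to $x$. Concatenating $P$ and $P'$ gives a cycle through $x$ of some length $\ell$, while inserting the self-loop at $\A_\star$ gives a cycle through $x$ of length $\ell+1$; since $\gcd(\ell,\ell+1)=1$, the gcd of the lengths of all cycles visiting $x$ equals $1$, which is exactly the definition of aperiodicity. As $x$ is arbitrary, each chain is aperiodic. The main obstacle—and the only delicate point—is verifying that the chosen elementary move really leaves the class: this rests entirely on the trimness constraint (loss of accessibility for $\A_0$, loss of co-accessibility for $\A_1$), and one must check that the degree bounds of $\N_m(n)$ and $\N_m^\prime(n)$ do not interfere, which they do not, since both $\A_0$ and $\A_1$ have out-degree at most $1\leq m$.
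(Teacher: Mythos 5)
Your proof is correct and follows essentially the same route as the paper: exhibit a rejected move at $\A_0$ (resp.\ $\A_1$) to obtain a self-loop there, then combine it with the irreducibility of Lemma~\ref{lm:irreductible} to produce cycles of coprime lengths through every state. The only differences are cosmetic: the paper rejects the move $\ChT(\A_1,(1,a_0,2))$ (loss of accessibility) where you reject $\ChF(\A_1,n)$ (loss of co-accessibility), and your explicit counting remark and your counterexample showing that $\S^\X(x,x)$ can vanish are welcome details that the paper leaves implicit.
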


\begin{proof}
With the notations of the proof of Lemma~\ref{lm:irreductible}, there is a
path of length $n_x$ from any $x\in \X$ to $\A_0$. Therefore there is a 
cycle of length $2n_x$ visiting $x$. 

Now, $\ChI(\A_0,1)\notin \X$ since $1$ is not accessible in $\A_0$. It
follows that $S^\X(\A_0,\A_0)\neq 0$. Therefore, there is also a 
cycle of length $2n_x+1$ visiting $x$. Since the gcd of $2n_x$ and $2n_x +1$ is $1$,
the chain is aperiodic.

The proof for   $\Sr^{\N(n)^{\bullet}}$ is similar:
From $\A_1$, if the transition $(1,a_0,2)$ is picked up (with
probability $\frac{(1-\rho)}{|\Sigma|n^2}\neq 0$), then we move from
$\A_1$ to $\A_1$.  Therefore there is a loop of length $1$ from $\A_1$
to $\A_1$ in the graph of the Markov chain, proving that the Markov
chain is aperiodic.

One can do as well for
$\Sr^{\N_m(n)^{\bullet}}$ and $\Sr^{\N_m^\prime(n)^{\bullet}}$.
\endofproof
\end{proof}

\begin{proposition}\label{prop:ergodic}
Let $m,n$ be two fixed positive integers, with $m\geq 2$.
The Markov chains with matrix  $\S^{\N(n)}$,
$\S^{\N_m(n)}$ and $\S^{\N_m^\prime(n)}$ are ergodic and their stationary distributions
are the uniform distributions.
\end{proposition}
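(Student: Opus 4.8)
The plan is to assemble the statement from the two preceding lemmas together with the general facts about Markov chains recalled in Section~\ref{sec:theoryMC}: a chain is ergodic exactly when it is irreducible and aperiodic; an ergodic chain has a unique stationary distribution; and a symmetric ergodic chain has the uniform distribution as its stationary distribution. Under the standing positivity assumption $\rho_1,\rho_2,\rho_3>0$, Lemma~\ref{lm:irreductible} and Lemma~\ref{lm:aperiodic} already provide irreducibility and aperiodicity of $\S^{\N(n)}$, $\S^{\N_m(n)}$ and $\S^{\N_m^\prime(n)}$, so ergodicity is immediate and with it the existence and uniqueness of a stationary distribution. The only remaining task is to identify that distribution as the uniform one, and for this it suffices to prove that each of the three transition matrices is symmetric.

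To establish symmetry I would rely on the fact that each of the three elementary moves is an involution. Fix a class $\X\in\{\N(n),\N_m(n),\N_m^\prime(n)\}$ and two distinct automata $x,y\in\X$. First I would note that the three move types act on disjoint coordinates of an automaton: $\ChI$ alters only the set of initial states, $\ChF$ only the set of final states, and $\ChT$ only the set of transitions. Hence at most one of the defining cases of $\S^\X(x,y)$ applies to the pair $(x,y)$, and the witnessing $q$ (resp. $(p,a,q)$) is unique, so the definition is unambiguous. Then, using the involutive identities $\ChI(\ChI(\A,q),q)=\A$ and their analogues for $\ChF$ and $\ChT$, I would observe that $y=\ChI(x,q)$ forces $x=\ChI(y,q)$; since $x\in\X$ by hypothesis, the reverse transition is available in $\X$ and carries the same weight $\frac{\rho_1}{|Q|}$, and the same reasoning with $\frac{\rho_2}{|Q|}$ and $\frac{\rho_3}{|\Sigma|\,|Q|^2}$ handles $\ChF$ and $\ChT$. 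Finally, if none of the cases applies, so that $\S^\X(x,y)=0$, the involution property shows that no single move turns $y$ into $x$ either, whence $\S^\X(y,x)=0$. In every case $\S^\X(x,y)=\S^\X(y,x)$, so each chain is symmetric.

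Combining the two steps, each chain is ergodic by Lemmas~\ref{lm:irreductible} and~\ref{lm:aperiodic} and symmetric by the involution argument, so by the results recalled in Section~\ref{sec:theoryMC} its unique stationary distribution is uniform on $\X$. The only point demanding care — and the sole genuine obstacle — is the bookkeeping that guarantees the involutive reverse move stays inside the restricted class $\X$ and that the case analysis in the definition of $\S^\X$ is both exhaustive and non-overlapping. Once the disjointness of the three coordinate types is made explicit, this collapses to the routine involution identities and symmetry follows without computation.
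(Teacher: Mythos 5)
Your proposal is correct and follows exactly the paper's route: ergodicity from Lemmas~\ref{lm:irreductible} and~\ref{lm:aperiodic}, then uniformity of the stationary distribution from symmetry of the transition matrix. The only difference is that the paper simply asserts symmetry while you spell out the involution/disjoint-coordinates argument that justifies it; this is a harmless (indeed welcome) elaboration, not a different approach.
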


\begin{proof}
By lemma~\ref{lm:aperiodic} and~\ref{lm:irreductible}, the chain is ergodic.
Since the matrix $\S^{\N(n)}$, $\S^{\N_m(n)}$ and $\S^{\N_m^\prime(n)}$ are
symmetric, their stationary distributions are the uniform distributions
(over the respective family of automata).
\endofproof
\end{proof}

In practice, computing $X_{t+1}$ from $X_t$ is done in the following way:
the first step consists in choosing with probabilities $\rho_1$, $\rho_2$
and $\rho_3$ whether we will change either an initial state, a final state
or a transition. In a second step and in each case, all the possible
changing operations are performed with the same probability. If the obtained
automaton is in the corresponding class, $X_{t+1}$ is set to this value.
Otherwise, $X_{t+1}=X_t$. Since verifying that an automaton is in the
desired class ($\N(n)$, $\N_m(n)$ or $\N_m^\prime(n)$), can be performed in
time polynomial in $n$, computing $X_{t+1}$ from $X_t$ can be done in time
polynomial in $n$.



We define the lazy Markov chain on $\NFA(n)$ by
$L_{\rho_1,\rho_2,\rho_3}^{\NFA(n)}(x,y)=\frac{1}{2}\S^{\NFA(n)}(x,y)$ if $x\neq
y$ and
$L_{\rho_1,\rho_2,\rho_3}^{\NFA(n)}(x,x)=\frac{1}{2}+\frac{1}{2}\S^{\NFA(n)}(x,x)$.
It is known that a symmetric Markov chain and its associated lazy
Markov chain have similar mixing times.

\begin{proposition}\label{prop:mixing}
The $\varepsilon$-mixing time $\tau(\varepsilon)$ of
$L_{\rho_1,\rho_2,\rho_3}^{\NFA(n)}$ satisfies
\begin{align*}
\tau(\varepsilon)\leqslant \max &\left(\left\lceil\frac{n}{\rho_1}\left(\log(n)+\log\left(\frac{1}{\rho_1\varepsilon}\right)\right)\right\rceil,
 \left\lceil\frac{n}{\rho_2}\left(\log(n),\log\left(\frac{1}{\rho_2\varepsilon}\right)\right)\right\rceil,\right.\\
&  \left.\left\lceil\frac{|\Sigma|n^2}{\rho_3}\left(\log(|\Sigma|n^2)+\log\left(\frac{1}{\rho_3\varepsilon}\right)\right)\right\rceil\right) .
\end{align*}
\end{proposition}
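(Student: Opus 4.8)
The plan is to analyze $L:=L_{\rho_1,\rho_2,\rho_3}^{\NFA(n)}$ by a grand coupling, exploiting the fact that on the full class $\NFA(n)$ no structural constraint can block a move, so the chain is an unconstrained ``flip one bit'' walk on the Boolean cube indexing the $2n+|\Sigma|n^2$ possible ingredients of an automaton: the $n$ potential initial states, the $n$ potential final states, and the $|\Sigma|n^2$ potential transitions. First I would record the transition law coordinate-wise: since every $\ChI$, $\ChF$, $\ChT$ move stays in $\NFA(n)$, one has $\S^{\NFA(n)}(x,x)=1-\rho_1-\rho_2-\rho_3$, and from $x$ the automaton $\ChI(x,q)$ (resp.\ $\ChF(x,q)$, $\ChT(x,(p,a,q))$) is reached by $L$ with probability $\frac{\rho_1}{2n}$ (resp.\ $\frac{\rho_2}{2n}$, $\frac{\rho_3}{2|\Sigma|n^2}$), the factor $\frac12$ coming from laziness, and $x$ is left unchanged otherwise.

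The key reformulation, which makes a \emph{sticky} coupling possible, is to replace ``flip a bit'' by ``resample a bit''. I would check that $L$ is reproduced exactly by the following rule: pick at most one coordinate per step --- a given initial (resp.\ final) coordinate with probability $\frac{\rho_1}{n}$ (resp.\ $\frac{\rho_2}{n}$), a given transition coordinate with probability $\frac{\rho_3}{|\Sigma|n^2}$, and no coordinate with the remaining probability $1-\rho_1-\rho_2-\rho_3$ --- and, if a coordinate is picked, reset it to a uniformly random bit. Indeed, a picked coordinate is flipped with probability $\frac12$, so each neighbour of $x$ is reached with exactly the probability recorded above, and the self-loop probability is $(1-\rho_1-\rho_2-\rho_3)+\frac12(\rho_1+\rho_2+\rho_3)=1-\frac12(\rho_1+\rho_2+\rho_3)=L(x,x)$; uniform resampling is precisely what hides the laziness. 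This also re-confirms that the uniform distribution is stationary.

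Next I would run two copies $X_t$ (started at an arbitrary $x$) and $Y_t$ (started from the uniform $\pi$) of this resampling chain using the same coordinate selection and the same resampled bit at every step. Each copy then individually obeys $L$, while the coupling is sticky: once a coordinate has been selected, it holds the same value in both chains ever after. Hence $\{X_t\neq Y_t\}\subseteq\{\text{some coordinate has never been selected up to time }t\}$, and the coupling inequality gives $\tv{L^t(x,\cdot)-\pi}\le\P(\text{some coordinate is unselected at time }t)$, uniformly in $x$.

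It then remains a coupon-collector estimate handled group by group. A fixed initial coordinate is unselected at time $t$ with probability $(1-\frac{\rho_1}{n})^t\le e^{-\rho_1 t/n}$, so a union bound over the $n$ initial coordinates contributes at most $n\,e^{-\rho_1 t/n}$, which is $\le\rho_1\varepsilon$ as soon as $t\ge\frac{n}{\rho_1}\bigl(\log n+\log\frac{1}{\rho_1\varepsilon}\bigr)$; the final and transition groups give the analogous thresholds, with bounds $\rho_2\varepsilon$ and $\rho_3\varepsilon$. Taking $t=\tau(\varepsilon)$, the maximum of the three (ceiled) thresholds, every group is past its threshold, so summing the three union bounds yields $\P(\text{some coordinate unselected})\le(\rho_1+\rho_2+\rho_3)\varepsilon\le\varepsilon$, the last step being the standing hypothesis $\rho_1+\rho_2+\rho_3\le1$; this is exactly the announced bound. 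The only genuinely delicate point is the flip-to-resample reformulation of the second paragraph: the naive coupling that performs the same flip in both chains preserves disagreements (and the flip walk is even periodic without laziness), so the whole argument hinges on recognizing that the lazy flip walk coincides with the resampling walk, and in particular on getting the factor $2$ in the selection probabilities right.
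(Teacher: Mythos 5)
Your proof is correct, and it reaches the stated bound by a genuinely different route from the paper. The paper's proof simply observes that an automaton in $\NFA(n)$ is a triple of characteristic vectors, decomposes the chain into three lazy random walks on hypercubes of dimensions $n$, $n$ and $|\Sigma|n^2$ (updated with probabilities $\rho_1,\rho_2,\rho_3$), and invokes the known mixing-time bound for the lazy walk on the hypercube from the cited reference; it leaves implicit both the fact that each component performs a \emph{random} number of updates by time $t$ and the subadditivity argument needed to combine the three component distances into one $\varepsilon$. You instead give a self-contained grand coupling: the flip-to-resample reformulation of the lazy walk, the sticky coupling in which a coordinate agrees forever once selected, and a coupon-collector union bound over all $2n+|\Sigma|n^2$ coordinates with heterogeneous selection probabilities. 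This is essentially the textbook proof of the very hypercube result the paper cites, but carried out globally, which handles the three groups in one stroke and makes transparent why the thresholds contain $\log\frac{1}{\rho_i\varepsilon}$ (each group is driven below $\rho_i\varepsilon$ so the total is at most $(\rho_1+\rho_2+\rho_3)\varepsilon\leq\varepsilon$). The two arguments rest on the same coupon-collector phenomenon; yours buys rigor and self-containment at the cost of length, the paper's buys brevity by outsourcing the coupling to the literature. (Incidentally, your reading of the second threshold as $\log(n)+\log\bigl(\frac{1}{\rho_2\varepsilon}\bigr)$ is the correct repair of the comma that appears in the statement.)
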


\begin{proof}

The proof lies on classical results on random walks int he hypercube. 

To each finite automaton $\A=(Q,\Sigma,\Delta,I,F)$ in $\NFA(n)$ one can
associate a function  $\varphi_I$ from $Q=\{1,\ldots,n\}$ to  $\{0,1\}$ by:

$$
\forall q\in Q,\, \varphi_I(q)=\begin{cases}1 \text{ if } q\in I\\ 0 \text{ otherwise.}\end{cases}
$$

In a same way, the functions   ${\varphi _F: Q\to \{0,1\}}$ and
${\varphi_\Delta : Q\times\Sigma\times Q}$ are defined as the characteristic
function s $1_F$ and $1_\Delta$ on $F$ and $\Delta$
considered as subsets of respectively  $Q$ and $Q\times \Sigma\times Q$.

The automaton $\A$ is completely defined by  $\varphi_I$, $\varphi_F$ and
$\varphi_\Delta$ (since the alphabet is fixed). Therefore, the Markov chain
can be decomposed into three random walks~: the two first on hypercubes of
dimension $n$ and the last one on an hypercube of dimension  $n^2|\Sigma|$
(for the transitions). 
At each step, one moves with probability  $\rho_1$ in the first hypercube,
with probability  $\rho_2$ in the second hypercube and with probability
$\rho_3$ in the last hypercube. 

The result is then the application of known mixing time results for lazy
random walk in the
hypercube, see for instance~\cite[Section 6.5.2, page 81]{mixing}.~\endofproof
\end{proof}

At this stage, we are not able to compute bounds on the mixing times
of the other Markov chains.  Practical experiments, with various sizes
of alphabets, seems to show that most of the automata generated by
the above lazy Markov Chain (using $n^3$ as mixing bound) are trim. The
experimental results are reported in Table~\ref{tab:rejets}.
This observation leads us to consider, for other experiments, to move
$n^3$ steps to sample automata. Of course, this is not a proof, just
an  empirical estimation.

\begin{table}
\begin{center}
\begin{tabular}{|c|c|c|c|c|c|}
\hline
$n\to$ &  5 & 10& 15 &20  \\\hline
$\N(n)$, 2 letters & 0.94&  0.95  &  0.99 &  1.0     \\\hline
$\N(n)$, 4 letterers & 0.95 &  0.96  & 1.0  & 1.0     \\\hline
\end{tabular}
\end{center}

\caption{Observed proportion of trim automata.}\label{tab:rejets}
\end{table}

\section{Metropolis Hastings Approach}\label{sec:randomup}
In this section we show how to use the Metropolis-Hastings algorithm to
uniformly generate NFAs up to isomorphism and that, for this purpose, it
{\it suffices} to compute the sizes of the automorphism groups of involved
NFAs. We prove in Section~\ref{sec:counting} that this computation is
polynomially equivalent to testing the isomorphism problem for the involving
automata. For the classes $\N_m(n)$, $\N_m(n)^{\bullet}$, $\N_m^\prime(n)$
and $\N_m^\prime(n)^{\bullet}$, we show that it can be done in time
polynomial in $n$ (if $m$ is fixed). In Section~\ref{sec:labellings} we show
how to practically compute the sizes of automorphism group using labelings
techniques. Finally, experimental results are given in Section~\ref{sec:xp}.

\subsection{Metropolis-Hastings Algorithm}\label{sec:algo-metro}

For a class $\C$ of NFAs (closed by isomorphism) and $n$ a positive integer,  let
 $\C(n)$ be the elements of $\C$ whose set of states is $\{1,\ldots,n\}$ and
 let $\gamma_n$ be the number of isomorphism classes on $\C(n)$.
There are $n!$ possible bijections on $\{1,\ldots,n\}$. If $\A\in\C(n)$, 
Let $\varphi_1$ and $\varphi_2$ be two bijections on $\{1,\ldots,n\}$. One
has  $\varphi_1(\A)=\varphi_2(\A)$ iff $\varphi_2^{-1}\varphi_1(\A)=\A$, iff
 $\varphi_2^{-1}\varphi_1(\A)\in\Aut(\A)$. It follows that  the isomorphism
 classes of $\A$ (in $\C(n)$) has $\frac{n!}{|\Aut(\A)|}$ elements. 
This leads to the following result.

\begin{proposition}\label{prop:MH}
Randomly generates an element $x$ of $\C(n)$ with probability 
$\frac{|\Aut(x)|}{\gamma_nn!}$ provides a uniform random generator of the
isomorphism classes of  $\C(n)$.
\end{proposition}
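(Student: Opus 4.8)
The plan is to show that, under the proposed sampling, every isomorphism class of $\C(n)$ is hit with the same probability $\frac{1}{\gamma_n}$, so that the induced distribution on classes is uniform. First I would record the key structural fact that $|\Aut(\cdot)|$ is constant along each isomorphism class: if $\varphi(\A_1)=\A_2$ then $\Aut(\A_2)=\varphi\,\Aut(\A_1)\,\varphi^{-1}$ (for $\psi\in\Aut(\A_1)$ one checks directly that $\varphi\psi\varphi^{-1}$ fixes $\A_2$), so the two automorphism groups are conjugate and in particular have the same cardinality. Combined with the counting identity established just above the statement — namely that the isomorphism class of any $\A\in\C(n)$ contains exactly $\frac{n!}{|\Aut(\A)|}$ elements — this supplies all the ingredients.

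Next, I would fix an arbitrary isomorphism class $C$ of $\C(n)$ together with a representative $\A\in C$, and compute the probability that the generated element lands in $C$. Summing the per-element probabilities over $C$ and using $|\Aut(y)|=|\Aut(\A)|$ for every $y\in C$ gives
$$
\sum_{y\in C}\frac{|\Aut(y)|}{\gamma_n n!}
=|C|\cdot\frac{|\Aut(\A)|}{\gamma_n n!}
=\frac{n!}{|\Aut(\A)|}\cdot\frac{|\Aut(\A)|}{\gamma_n n!}
=\frac{1}{\gamma_n}.
$$
The automorphism-group factors cancel, and the result no longer depends on the choice of $C$; this is precisely uniformity over the $\gamma_n$ isomorphism classes.

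Finally, I would verify that $x\mapsto\frac{|\Aut(x)|}{\gamma_n n!}$ is a genuine probability distribution on $\C(n)$, so that the sampling described is meaningful. Partitioning $\C(n)$ into its $\gamma_n$ isomorphism classes and applying the computation above to each class, the total mass is $\gamma_n\cdot\frac{1}{\gamma_n}=1$. I do not expect any real obstacle here: the entire argument is a cancellation driven by the orbit-size formula $|C|=\frac{n!}{|\Aut(\A)|}$. The only point genuinely worth spelling out is that $|\Aut(\cdot)|$ is invariant along an isomorphism class, since without this observation the sum over $C$ would not factor cleanly and the cancellation would not go through.
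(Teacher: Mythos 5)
Your proposal is correct and follows essentially the same route as the paper: sum the per-element probabilities over an isomorphism class and use the orbit-size formula $|H|=\frac{n!}{|\Aut(\A)|}$ to obtain $\frac{1}{\gamma_n}$. The extra points you spell out (invariance of $|\Aut(\cdot)|$ under isomorphism via conjugation, and that the weights sum to $1$) are left implicit in the paper but are consistent with its argument.
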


\begin{proof}
Let $H$ be an isomorphism class of $\C(n)$; $H$ is generated with probability
$$\sum_{x\in H}\frac{|\Aut(x)|}{\gamma_nn!}=\sum_{x\in
  H}\frac{1}{\gamma_n|H|}=\frac{1}{\gamma_n|H|} \sum_{x\in
  H}1=\frac{|H|}{\gamma_n|H|}= \frac{1}{\gamma_n}.$$
\endofproof
\end{proof}

In order to compute $P_\nu$ it is not necessary to compute $\gamma_n$, since
$\frac{\nu(x)}{\nu(y)}=\frac{|\Aut(y)|}{|\Aut(x)|}$. A direct use of the
Metropolis-Hastings algorithm requires to compute all the neighbors of $x$
and the sizes of theirs automorphism groups to move from $x$. Since a
$n$-state automaton has about $|\Sigma|n^2$ neighbors, it can be a quite
huge computation for each move. However, practical evaluations show that in
most cases the automorphism group of an automaton is quite small and,
therefore, the rejection approach exposed in~\cite{Chib} is more tractable.
It consists in moving from $x$ to $y$ using $S(x,y)$ (the non-modified
chain) and to accept $y$ with probability
$\min\left\{1,\frac{\nu(y)}{\nu(x)}\right\}$.
If it is not accepted, repeat the process (moving from $x$ to $y$ using $S$
with probability $\min\left\{1,\frac{\nu(y)}{\nu(x)}\right\}$) until
acceptance. In practice, we observe a very small number of rejects. 

The problem of computing the size of the automorphism group of a NFA is
investigated in the next session. Assuming it can be done in a reasonable
time, an alternative solution to randomly generate NFAs up to isomorphism
may be to use a rejection algorithm: randomly and uniformly generate a NFA
$\A$ and keep it with probability  $\frac{|\Aut(\A)|}{n!}$. This way, each
class of isomorphism is picked up with the same probability. However, as we
will observe in the experiments (see Table~\ref{fig:nbisom1}), most of automata have a very small group of
automorphisms, and the number of rejects will be intractable, even for quite
small $n$'s.

\subsection{Counting Automorphisms}\label{sec:counting}

This section is dedicated to show how to compute $|\Aut(\A)|$ by using a
polynomial number of calls to the isomorphism problem. It is an adaptation
of a corresponding result for directed
graphs~\cite{DBLP:journals/ipl/Mathon79}.






Let $\A=(Q,\Sigma,\Delta,I,F)$ be a NFA and $Q^\prime\subseteq Q$. Let
$\sigma$ be an arbitrary bijective function from $Q^\prime$ into
$\{1,\ldots,|Q^\prime|\}$, $a_0$ an arbitrary letter in $\Sigma$ and $\ell=
|Q|+|Q^\prime|+2$. For each state $r\in Q\backslash Q^{\prime}$ we denote by
$\A_r^{Q^\prime}$ the automaton $(Q_r,\Sigma,\Delta_r,I,F)$ where
$Q_r=Q\cup\{(p,i)\mid p\in Q\text{ and } 1\leq i \leq \ell\}$, and
$\Delta_r=\Delta\cup\{(p,a,(p,1)\mid p\in Q\}\cup \{((p,i),a_0,(p,i+1))\mid
p\in Q^\prime\text{ and } 1\leq i <
|Q|+1+\sigma(p)\}\cup \{((r,i),a_0,(r,i+1))\mid 1\leq\
i \leq \ell \}\cup \{((p,i),a_0,(p,i+1))\mid p\notin Q^\prime\cup\{r\}\text{
and } 1< i \leq |Q|+1\}.$ Note that the size of $\A_r^{Q^\prime}$ is
polynomial in the size of $\A$. 

The two next lemma show how to polynomially reduce the problem of counting
automorphisms to the isomorphism problem. 

\begin{lemma}\label{lem:auto1}
Let $\A=(Q,\Sigma,\Delta,I,F)$ be a NFA  and $Q^\prime$ a
non-empty subset of $Q$. For every $q,q^\prime\in Q\backslash
Q^{\prime}$,  there
exists $\phi\in \Aut_{Q^\prime}(\A)$ such that $\phi(q)=q^\prime$ iff
$\A_q^{Q^\prime}$ and $\A_{q^\prime}^{Q^\prime}$ are isomorphic. 
\end{lemma}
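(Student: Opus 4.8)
The plan is to prove both implications by transporting the (auto)morphism across the tail gadget, using the lengths of the attached pendant paths as an invariant that every isomorphism is forced to respect.

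First I would record precisely what the gadget does. To each state $p\in Q$ the construction attaches a pendant $a_0$-path (a \emph{tail}) rooted at $(p,1)$, whose length depends only on the role of $p$: for $p\in Q'$ the tail has $|Q|+\sigma(p)$ edges, which lie in the band $\{|Q|+1,\ldots,|Q|+|Q'|\}$ and are pairwise distinct because $\sigma$ is a bijection onto $\{1,\ldots,|Q'|\}$; for the marked state $p=r$ the tail has the value $\ell=|Q|+|Q'|+2$, strictly larger than every value in that band; and for every remaining $p\in Q\setminus(Q'\cup\{r\})$ the tail has a common, strictly smaller length (about $|Q|$). The point of the choice $\ell=|Q|+|Q'|+2$ is exactly that these three ranges of lengths are pairwise disjoint and that $r$'s tail is uniquely the longest. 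Thus the map ``length of the tail hanging off $p$'' is an isomorphism-invariant coloring of the states of $Q$ that isolates $r$, gives each element of $Q'$ its own distinct color, and gives every other state one and the same color.

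For the forward implication, suppose $\phi\in\Aut_{Q'}(\A)$ with $\phi(q)=q'$. I would define $\Psi$ on $Q_q$ by $\Psi|_Q=\phi$ and $\Psi((p,i))=(\phi(p),i)$, and check it is an isomorphism $\A_q^{Q'}\to\A_{q'}^{Q'}$. Since $\phi$ fixes $Q'$ pointwise and is a bijection, it maps $Q\setminus Q'$ onto itself, and from $\phi(q)=q'$ with injectivity it sends $Q\setminus(Q'\cup\{q\})$ into $Q\setminus(Q'\cup\{q'\})$; hence the role (marked / in $Q'$ / generic) of each $p$ is preserved, so the tail of $p$ in $\A_q^{Q'}$ has the same length as the tail of $\phi(p)$ in $\A_{q'}^{Q'}$ and $\Psi$ carries tails bijectively to tails. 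On $Q$, $\Psi=\phi$ preserves $\Delta$, $I$ and $F$ because $\phi$ is an automorphism of $\A$ and $I,F$ are untouched by the construction; on the gadget it preserves the $a_0$-paths by the index-preserving definition. So $\Psi$ is an isomorphism.

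Conversely, let $\Psi:\A_q^{Q'}\to\A_{q'}^{Q'}$ be an isomorphism; this is the hard part. The crux is to show $\Psi$ restricts to a bijection of $Q$. I would characterize the original states in an isomorphism-invariant way as the nodes that are initial, final, incident to a $\Delta$-edge, or the root of a tail, while every gadget node lies on a pendant $a_0$-path and is neither initial nor final (recall $I,F\subseteq Q$); since the tails are disjoint simple paths attached at the $(p,1)$'s, $\Psi$ must map $Q$ onto $Q$ and each tail onto a tail. The length invariant of the first paragraph then forces $\Psi$ to send the uniquely longest tail to the uniquely longest tail, i.e. $\Psi(q)=q'$, and to send each $p\in Q'$ to the unique state of equal tail length, i.e. $\Psi(p)=p$ for all $p\in Q'$ (using that distinct elements of $Q'$ get distinct lengths). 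Finally $\phi:=\Psi|_Q$ preserves $\Delta$, $I$ and $F$ because $\Psi$ does and these live entirely in $Q$, so $\phi\in\Aut_{Q'}(\A)$ with $\phi(q)=q'$. I expect the only genuinely delicate points to be this invariant separation of $Q$ from the gadget nodes and the verification that the three length-bands are truly disjoint for the stated $\ell$; both are matters of the concrete index bookkeeping rather than of any conceptual difficulty.
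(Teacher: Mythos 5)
Your proof is correct and follows essentially the same route as the paper's: transport the automorphism across the gadget via $\hat\phi((p,i))=(\phi(p),i)$ for the forward direction, and use the pairwise-disjoint tail-length bands to pin down $\Psi(q)=q'$ and $\Psi|_{Q'}=\mathrm{id}$ for the converse. The one point where you diverge is the ``delicate'' step of showing an isomorphism must map $Q$ onto $Q$: the paper does this in one line by observing that states of $Q$ are co-accessible while tail nodes are not, \emph{using that $\A$ is trim}; your local argument via degrees and initial/final status also works but silently needs the same trimness hypothesis (otherwise a non-initial, non-final state of $Q$ with no outgoing $\Delta$-transition is locally indistinguishable from an interior tail node), so you should state that $\A$ is trim, as the paper's proof does.
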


\begin{proof}
In $\A_r$, for any state $p\in Q^\prime\cup\{r\}$, we denote by 
$\pi_p$ the path
$$\pi_p=(p,a_0,(p,1))((p,1),a_0,(p,2))\ldots ((p,\ell-1),a_0,(p,\ell)),$$
called the {\it tail} of $p$.

Assume  that there exists $\phi\in \Aut_{Q^\prime}(\A)$ such that
$\phi(q)=q^\prime$. Let $\hat{\phi}$ be the function defined from the set of
states of $\A_q$ into the set of $\A_{q^\prime}$ by: if $p\in Q$, then
$\hat{\phi}(p)=\phi(p)$ and if $(p,i)$ is a state of $\A_q$, then
$\hat{\phi}((p,i))=(\phi(p),i)$. This function is well defined since $p$ and
$\varphi(p)$ have tails of the same length. By construction, $\hat{\phi}$
is an isomorphism. 

Conversely, assume that there exists an isomorphism $\Phi$ from $\A_q$ to
$\A_{q^\prime}$. Since isomorphisms preserve accessible and co-accessible
states and since $\A$ is trim, $p\in Q$ iff $\Phi(p)\in Q$. Let $\phi$ be
the restriction of $\Phi$ to $Q$. Since $\Phi$ is a morphism, $\phi$ is an
automorphism of $\A$. Now, $\Phi$ preserves the lengths of the tails. It
follows that for any $p\in Q^\prime$, $\Phi(p)=p$. Furthermore,
$\Phi(q)=q^\prime$ for the same reason, proving the lemma.~\endofproof 
\end{proof}

\begin{lemma}\label{lem:auto2}
 Let $\A=(Q,\Sigma,\Delta,I,F)$ be a NFA  and $Q^\prime$ a
non-empty subset of $Q$. For every $q\in Q^\prime$, there exists an integer
$d$ such that $|\Aut_{Q^\prime\setminus\{q\}}(\A)|=d|\Aut_{Q^\prime}(\A)|$.
Moreover $d$ can be computed with a polynomial number of isomorphism tests
between automata of the form  $\A_{r}^{Q^\prime\setminus \{q\}}$. 
\end{lemma}

\begin{proof}
Let $d=|\{\phi(q)\mid \phi\in \Aut_{Q^\prime\setminus\{q\}}(\A)\}|.$ We
consider the relation $\sim_q$ on $\Aut_{Q^\prime\setminus\{q\}}(\A)$
defined by $\phi_1\sim_q\phi_2$ iff $\phi_1(q)=\phi_2(q)$. One has
$\phi_1\sim_q\phi_2$ iff $\phi_1\phi_2^{-1}\in \Aut_{Q^\prime}(\A)$. Therefore
$\sim$ is a group-congruence relation and, therefore, 
$|\Aut_{Q^\prime\setminus\{q\}}(\A)|=d|\Aut_{Q^\prime}(\A)|$.

To compute $d$ it suffices to test which elements of $Q\setminus Q^\prime$
are in $\{\phi(q)\mid \phi\in \Aut_{Q^\prime\setminus\{q\}}(\A)\}$,
whether there exists
an automorphism $\phi$ in $\Aut_{Q^\prime\setminus\{q\}}(\A)$ such that
$\phi(q)=p$.~\endofproof 
\end{proof}

 Lemma~\ref{lem:auto2} provides a  way to
compute sizes of automorphism groups by testing whether two NFAs are
isomorphic. Indeed, since $\Aut_Q(\A)$ is reduced to the  identity, 
and since $\Aut(\A)=\Aut_{\emptyset}(\A)$, one has, by a direct induction using
Lemma~\ref{lem:auto2}, $\Aut(\A)=d_1\ldots d_{|Q|}$, where each $d_i$ can be
computed by a polynomial number of isomorphism tests. Therefore, the problem
of counting automorphism reduces to test whether two automata are
isomorphic.

\section{Polynomial Time Result for Specific Classes}\label{sec:poly}

\subsection{Isomorphism Problem for Automata with a Bounded Degree}\label{sec:isom}

It is proved (not explicitly) in~\cite{DBLP:journals/siamcomp/Booth78} that
the isomorphism problem for deterministic automata (with a different
   notion of isomorphismsince automata may
  have different alphabets)  is polynomially equivalent
to the isomorphism problem for directed finite graphs. We prove
(Theorem~\ref{thm-isom}) a similar result for NFAs, by using an encoding
preserving some bounds on the output degree. Therefore, combining
Theorem~\ref{thm-isom} and Lemma~\ref{lem:auto2}, it is possible to
compute the size of the automorphism group of an automaton in $\N_m(n)$,
$\N^\prime_m(n)$, $\N_m(n)^{\bullet}$ and $\N^\prime_m(n)^{\bullet}$ in time
polynomial in $n$ (assuming that $m$ is a constant).

\begin{theorem}\label{thm-isom}
Let $m$ be a fixed integer. The isomorphism problem for automata in $\N_m$
$\N^\prime_m$, $\N_m(n)^{\bullet}$ and $\N^\prime_m(n)^{\bullet}$
can be solved in polynomial time.
\end{theorem}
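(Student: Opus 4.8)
The plan is to reduce the isomorphism problem for each of these classes to the bounded-degree directed graph isomorphism problem of Theorem~\ref{thm:luks}. Concretely, I would design a polynomial-time computable encoding $\A \mapsto G_\A$ sending an automaton to a plain directed graph, such that (i) $\A_1$ and $\A_2$ are isomorphic as NFAs if and only if $G_{\A_1}$ and $G_{\A_2}$ are isomorphic as directed graphs, and (ii) whenever $\A$ lies in one of the bounded output-degree classes $\N_m$, $\N_m^\prime$, $\N_m(n)^{\bullet}$, $\N_m^\prime(n)^{\bullet}$, the graph $G_\A$ has degree bounded by a constant depending only on $m$ and $|\Sigma|$. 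Granting both points, Theorem~\ref{thm:luks} decides isomorphism of $G_{\A_1}$ and $G_{\A_2}$ in time polynomial in their size, and since the encoding is of polynomial size, the whole test is polynomial in $n$.

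For the encoding itself, I would represent each state of $Q$ by a vertex and eliminate the two features a bare directed graph cannot express---the letter of a transition and the status (initial/final) of a state---by attaching rigid, locally recognizable gadgets. Fixing an enumeration $a_1,\dots,a_{|\Sigma|}$ of $\Sigma$, a transition $(p,a_i,q)$ would be replaced by a short oriented gadget inserted between $p$ and $q$ whose shape is characteristic of the index $i$ (for instance a directed path carrying a pendant of length $i$), so that both the orientation and the letter can be read back from the graph. Likewise, membership of a state in $I$, in $F$, or in both would be signalled by pendant gadgets of three distinct shapes hung on the corresponding vertex; the bullet classes, where $I=\{1\}$, are handled by an extra marker distinguishing the unique initial state. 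The gadgets must be chosen asymmetric and pairwise non-isomorphic enough that any isomorphism of $G_{\A_1}$ onto $G_{\A_2}$ is forced to send state-vertices to state-vertices, a letter-$i$ gadget to a letter-$i$ gadget, and flags to flags; correspondence (i) then follows by restricting such an isomorphism back to $Q$, and conversely by lifting an NFA isomorphism gadget-wise.

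The step I expect to be the real obstacle is (ii), the degree bound. The out-degree of a state-vertex is immediately controlled---it is at most $m$ (resp.\ $m|\Sigma|$)---and the gadget vertices have bounded degree by construction, so the only danger comes from states of large in-degree. In $\N_m$ and $\N_m^\prime$ only the output degree is constrained, so a single state may be the target of $\Theta(n)$ transitions, and a naive encoding would produce a vertex of unbounded degree, putting $G_\A$ outside the scope of Theorem~\ref{thm:luks}. The crux of the proof is therefore to encode incoming transitions without concentrating them on one vertex---funnelling the gadget-edges entering a state through an auxiliary bounded-degree structure---while keeping the encoding rigid enough that it creates no automorphisms beyond those induced by $\Aut(\A)$. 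Making this funnelling simultaneously preserve faithfulness (i) and secure the constant degree bound is the delicate part, and it is exactly here that the ``encoding preserving bounds on the output degree'' must do its work; once it is in place, invoking Theorem~\ref{thm:luks} finishes the argument, and combined with Lemma~\ref{lem:auto2} it yields the polynomial-time computation of $|\Aut(\A)|$ announced before the theorem.
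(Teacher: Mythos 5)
Your overall route is the one the paper takes: replace each transition by a midpoint vertex carrying a letter-identifying gadget (the paper uses a clique of size $h(a)$, where $h$ enumerates $\Sigma$), hang gadgets of three further distinct shapes on initial, final, and initial-and-final states, argue that the encoding is faithful because any graph isomorphism must send state-vertices to state-vertices and letter-$i$ gadgets to letter-$i$ gadgets, bound the degree, and invoke Theorem~\ref{thm:luks}. That part of your plan matches Proposition~\ref{prop:autotograph} of the paper almost exactly.

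The gap is precisely where you locate the crux, and you do not close it. First, the ``funnelling'' you call for cannot be done naively: a rigid bounded-degree structure collecting the incoming transition-gadgets of a state (say a balanced binary tree) does not admit every permutation of its leaves as an automorphism, so two isomorphic automata whose incoming transitions correspond under the ``wrong'' permutation would be mapped to non-isomorphic graphs, and faithfulness is lost. Second, and more seriously, no faithful polynomial-size bounded-degree encoding of $\N_2(n)$ can exist unless general graph isomorphism is polynomial-time solvable: given an undirected graph $(V,E)$ with no isolated vertices, take states $V\cup E$, transitions $(e,a,u)$ and $(e,a,v)$ for each edge $e=\{u,v\}$, and let every state be both initial and final; the result is a trim automaton of $\N_2(n)$ (and of $\N_2^\prime(n)$), out-degrees $0$ and $2$ separate $V$ from $E$, and two such automata are isomorphic iff the underlying graphs are. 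So the step you flag as delicate is in fact the step at which the whole strategy breaks for $\N_m$ and $\N_m^\prime$. You are in good company: the paper's own Proposition~\ref{prop:convertion} bounds only the outgoing edges of $G_\A$ and never addresses the in-degree of a state, which equals the number of transitions entering it and is unbounded in these classes, so the published argument does not close this gap either. As written, your proposal is an accurate diagnosis of the difficulty rather than a proof of the theorem.
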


The proof of the theorem is given in appendix and is based on a graph
encoding of non deterministic automata and on
Theorem~\ref{thm:luks}~\cite{DBLP:journals/jcss/Luks82}. Note that the proof
is constructive but the exponents are too huge to provide an efficient
algorithm. It will be possible to work on a finer encoding but we prefer, in
practice, to use labeling techniques described in the next section and that
are practically very efficient on graphs (see~\cite{gallian} for a recent
survey).
\subsection{Proof of Theorem~\ref{thm-isom}}


Let $h$ be an arbitrary bijective function from $\Sigma$ into
$\{1,\ldots,k\}$.

Let $\A=(Q,\Sigma,\Delta,I,F)$ be a finite automaton. We denote by
$G_\A$ the finite graph $(V,E)$ where:

\begin{itemize}
\item $V=Q\cup (I\cap F^c)\times\{1,\ldots,k+1\}\cup (F\cap
  I^c)\times\{1,\ldots,k+2\}\cup (I\cap F)\times\{1,\ldots,k+3\}\cup
  \{((p,a,q),i)\mid (p,a,q)\in\Delta\text{ and } 1\leq i \leq
  h(a)\}\cup\Delta.$

\item $E=\{(p,(p,a,q))\mid (p,a,q)\in \Delta\}\cup 
\{((p,a,q)),q)\mid (p,a,q)\in \Delta\} \cup
\{((p,a,q),((p,a,q),1))\mid (p,a,q)\in \Delta\}\cup
\{(((p,a,q),i),((p,a,q),j))\mid (p,a,q)\in \Delta \text{ and } 
1\leq i,j \leq h(a)\}\cup
\{((q,i),(q,j))\mid q\in I\cap F^c\text{ and }1\leq i,j \leq k+1 \}\cup
\{((q,i),(q,j))\mid q\in F\cap I^c\text{ and }1\leq i,j \leq k+2 \}\cup
\{((q,i),(q,j))\mid q\in F\cap I\text{ and }1\leq i,j \leq k+3 \}\cup
\{(q,(q,1))\mid q\in I\cup F)\}
$
\end{itemize}

Intuitively each transition is first decomposed into two edges, then a complete graph
(with a number of edges depending on the letter of the transition) 
is linked to the middle vertex. A similar construction is done for initial,
final or both initial and final states.

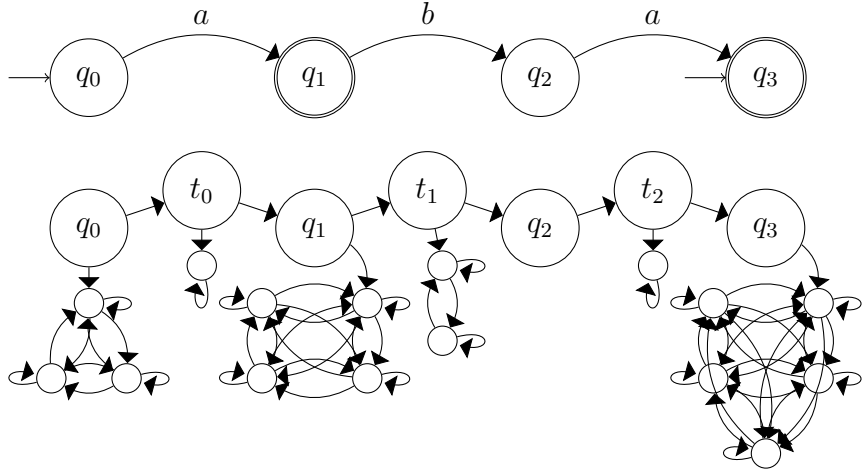
\begin{figure}
\begin{center}
\begin{tikzpicture}
\node[draw,state,initial, initial text=] (0) at (0,0) {$q_0$};
\node[draw,state, accepting] (1) at (3,0) {$q_1$};
\node[draw,state] (2) at (6,0) {$q_2$};
\node[draw,state,accepting, initial, initial text=] (3) at (9,0) {$q_3$};

\path[->,>=triangle 90] (0) [] edge[above,bend left] node {$a$} (1);
\path[->,>=triangle 90] (1) [] edge[above,bend left] node {$b$} (2);
\path[->,>=triangle 90] (2) [] edge[above,bend left] node {$a$} (3);

\node[draw,state] (0) at (0,-2) {$q_0$};
\node[draw,circle] (q01) at (0,-3){};
\node[draw,circle] (q02) at (-.5,-4){};
\node[draw,circle] (q03) at (.5,-4){};

\path[->,>=triangle 90] (q01) [] edge[above,bend left] node {} (q02);
\path[->,>=triangle 90] (q01) [] edge[above,bend left] node {} (q03);
\path[->,>=triangle 90] (q02) [] edge[above,bend left] node {} (q01);
\path[->,>=triangle 90] (q02) [] edge[above,bend left] node {} (q03);
\path[->,>=triangle 90] (q03) [] edge[above,bend left] node {} (q01);
\path[->,>=triangle 90] (q03) [] edge[above,bend left] node {} (q02);
\path[->,>=triangle 90] (q03) [] edge[above,loop right] node {} ();
\path[->,>=triangle 90] (q01) [] edge[above,loop right] node {} ();
\path[->,>=triangle 90] (q02) [] edge[above,loop left] node {} ();
\path[->,>=triangle 90] (0) [] edge[above] node {} (q01);

\node[draw,state] (1) at (3,-2) {$q_1$};
\node[draw,circle] (q11) at (3.7,-3){};
\node[draw,circle] (q12) at (2.3,-4){};
\node[draw,circle] (q13) at (3.7,-4){};
\node[draw,circle] (q14) at (2.3,-3){};
\path[->,>=triangle 90] (q11) [] edge[above,bend left] node {} (q12);
\path[->,>=triangle 90] (q11) [] edge[above,bend left] node {} (q13);
\path[->,>=triangle 90] (q11) [] edge[above,bend left] node {} (q14);
\path[->,>=triangle 90] (q12) [] edge[above,bend left] node {} (q11);
\path[->,>=triangle 90] (q12) [] edge[above,bend left] node {} (q13);
\path[->,>=triangle 90] (q12) [] edge[above,bend left] node {} (q14);
\path[->,>=triangle 90] (q13) [] edge[above,bend left] node {} (q11);
\path[->,>=triangle 90] (q13) [] edge[above,bend left] node {} (q12);
\path[->,>=triangle 90] (q13) [] edge[above,bend left] node {} (q14);
\path[->,>=triangle 90] (q14) [] edge[above,bend left] node {} (q11);
\path[->,>=triangle 90] (q14) [] edge[above,bend left] node {} (q12);
\path[->,>=triangle 90] (q14) [] edge[above,bend left] node {} (q13);
\path[->,>=triangle 90] (q13) [] edge[above,loop right] node {} ();
\path[->,>=triangle 90] (q11) [] edge[above,loop right] node {} ();
\path[->,>=triangle 90] (q12) [] edge[above,loop left] node {} ();
\path[->,>=triangle 90] (q14) [] edge[above,loop left] node {} ();
\path[->,>=triangle 90] (1) [] edge[above, bend left] node {} (q11);

\node[draw,state] (2) at (6,-2) {$q_2$};

\node[draw,state] (3) at (9,-2) {$q_3$};
\node[draw,circle] (q31) at (9.7,-3){};
\node[draw,circle] (q32) at (8.3,-4){};
\node[draw,circle] (q33) at (9.7,-4){};
\node[draw,circle] (q34) at (8.3,-3){};
\node[draw,circle] (q35) at (9,-5){};
\path[->,>=triangle 90] (q35) [] edge[above,bend left] node {} (q31);
\path[->,>=triangle 90] (q35) [] edge[above,bend left] node {} (q32);
\path[->,>=triangle 90] (q35) [] edge[above,bend left] node {} (q33);
\path[->,>=triangle 90] (q35) [] edge[above,bend left] node {} (q34);
\path[->,>=triangle 90] (q31) [] edge[above,bend left] node {} (q32);
\path[->,>=triangle 90] (q31) [] edge[above,bend left] node {} (q35);
\path[->,>=triangle 90] (q31) [] edge[above,bend left] node {} (q33);
\path[->,>=triangle 90] (q31) [] edge[above,bend left] node {} (q34);
\path[->,>=triangle 90] (q32) [] edge[above,bend left] node {} (q31);
\path[->,>=triangle 90] (q32) [] edge[above,bend left] node {} (q33);
\path[->,>=triangle 90] (q32) [] edge[above,bend left] node {} (q35);
\path[->,>=triangle 90] (q32) [] edge[above,bend left] node {} (q34);
\path[->,>=triangle 90] (q33) [] edge[above,bend left] node {} (q31);
\path[->,>=triangle 90] (q33) [] edge[above,bend left] node {} (q32);
\path[->,>=triangle 90] (q33) [] edge[above,bend left] node {} (q35);
\path[->,>=triangle 90] (q33) [] edge[above,bend left] node {} (q34);
\path[->,>=triangle 90] (q34) [] edge[above,bend left] node {} (q31);
\path[->,>=triangle 90] (q34) [] edge[above,bend left] node {} (q32);
\path[->,>=triangle 90] (q34) [] edge[above,bend left] node {} (q35);
\path[->,>=triangle 90] (q34) [] edge[above,bend left] node {} (q33);
\path[->,>=triangle 90] (q33) [] edge[above,loop right] node {} ();
\path[->,>=triangle 90] (q31) [] edge[above,loop right] node {} ();
\path[->,>=triangle 90] (q32) [] edge[above,loop left] node {} ();
\path[->,>=triangle 90] (q34) [] edge[above,loop left] node {} ();
\path[->,>=triangle 90] (q35) [] edge[above,loop left] node {} ();
\path[->,>=triangle 90] (3) [] edge[above, bend left] node {} (q31);

\node[draw,state] (0-1) at (1.5,-1.5) {$t_0$};
\node[draw,circle] (t01) at (1.5,-2.5){};
\path[->,>=triangle 90] (t01) [] edge[loop below] node {} (t01);
\path[->,>=triangle 90] (0-1) [] edge[above] node {} (t01);

\node[draw,state] (1-2) at (4.5,-1.5) {$t_1$};
\node[draw,circle] (t11) at (4.7,-2.5){};
\node[draw,circle] (t12) at (4.7,-3.5){};
\path[->,>=triangle 90] (t11) [] edge[loop right] node {} (t11);
\path[->,>=triangle 90] (t12) [] edge[loop right] node {} (t12);
\path[->,>=triangle 90] (1-2) [] edge[above] node {} (t11);
\path[->,>=triangle 90] (t12) [] edge[above,bend left] node {} (t11);
\path[->,>=triangle 90] (t11) [] edge[above, bend left] node {} (t12);

\node[draw,state] (2-3) at (7.5,-1.5) {$t_2$};
\node[draw,circle] (t21) at (7.5,-2.5){};
\path[->,>=triangle 90] (t21) [] edge[loop below] node {} (t21);
\path[->,>=triangle 90] (2-3) [] edge[above] node {} (t21);

\path[->,>=triangle 90] (0) [] edge[above] node {} (0-1);
\path[->,>=triangle 90] (0-1) [] edge[above] node {} (1);
\path[->,>=triangle 90] (1) [] edge[above] node {} (1-2);
\path[->,>=triangle 90] (1-2) [] edge[above] node {} (2);
\path[->,>=triangle 90] (2) [] edge[above] node {} (2-3);
\path[->,>=triangle 90] (2-3) [] edge[above] node {} (3);
\end{tikzpicture}
\end{center}
\caption{Example $\A$ and $G_\A$, $h(a)=1$ and $h(b)=2$}
\end{figure}

For any vertex $s$ of $G_\A=(V,E)$, we denote by $d(s)$ the largest possible size of a clique in $G_\A$ containing $s$. More formally, one has
$$d(s)=\max\{|H|,\; H\times H\subseteq E \text{ and } s\in H\}|$$ 
Note that we may have $d(s)=0$.

\begin{lemma}
For any vertex $s$ of $G_\A$, one has $0\leq d(s)\leq k+3$.
\end{lemma}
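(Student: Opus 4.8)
The plan is to exploit the fact that the condition $H\times H\subseteq E$ is extremely restrictive in the directed graph $G_\A$: it forces every vertex of $H$ to carry a self-loop (take the pair $(s,s)$ with $s\in H$) and every pair of distinct vertices of $H$ to be joined by edges in \emph{both} directions. The first step is therefore to record exactly which vertices $s$ satisfy $(s,s)\in E$. Inspecting the definition of $E$, the only edges of the form $(s,s)$ come from the clique-edge sets evaluated at $i=j$: those attached to the state-gadgets ($q\in I\cap F^c$, $q\in F\cap I^c$ and $q\in I\cap F$) and to the transition-gadgets ($((p,a,q),i)$). In particular, the original states $q\in Q$ and the middle vertices $(p,a,q)\in\Delta$ carry no self-loop.

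From this I would immediately dispatch two cases. If $s\in Q$ or $s\in\Delta$, then no $H$ with $s\in H$ can satisfy $H\times H\subseteq E$, so the set whose maximum defines $d(s)$ is empty and, by the convention $\max\emptyset=0$ announced right after the definition, $d(s)=0$. This also takes care of the lower bound, since for every $s$ either $d(s)=0$ or $d(s)=|H|\geq 1$, so $d(s)\geq 0$ always holds.

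It remains to bound $d(s)$ when $s$ is a gadget vertex. Here the key observation is that, apart from the self-loops and the internal (bidirectional) edges of the gadget cliques, every remaining edge of $E$ — namely $(p,(p,a,q))$, $((p,a,q),q)$, $((p,a,q),((p,a,q),1))$ and $(q,(q,1))$ — is present in one direction only. Since a clique requires both directions between distinct vertices, none of these bridging edges can occur inside an $H$, and in any case the endpoints they reach outside a gadget (states and middle vertices) have no self-loop. Consequently any $H$ containing a gadget vertex is contained entirely within the single gadget clique to which that vertex belongs.

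The final step is then to read off the sizes of these gadget cliques: $h(a)\leq k$ for a transition-gadget, $k+1$ for an initial-only state, $k+2$ for a final-only state, and $k+3$ for a both-initial-and-final state. Taking the maximum gives $d(s)\leq k+3$ for every gadget vertex, and combining with the two easy cases yields $0\leq d(s)\leq k+3$ for all $s$. The one point I would be most careful about — and really the only substantive content — is the interplay between the directedness of the bridging edges and the self-loop requirement, since this is precisely what confines every clique to a single gadget; the rest is a direct inspection of the edge set and an elementary comparison of the four gadget sizes.
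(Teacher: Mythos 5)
Your proof is correct and is essentially the detailed working-out of the paper's own argument, which simply says ``by construction'': you verify directly from the definition of $E$ that only the gadget vertices carry self-loops, that any $H$ with $H\times H\subseteq E$ is therefore confined to a single gadget clique, and that these cliques have sizes $h(a)\leq k$, $k+1$, $k+2$ and $k+3$. Nothing is missing; the case analysis and the handling of $d(s)=0$ for $s\in Q\cup\Delta$ are exactly what the paper leaves implicit.
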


\begin{proof}
By construction.
\end{proof}

\begin{lemma}\label{lemma:isom1}
Let $\A$ be a finite automaton. If there is in $G_\A$ an edge of the form
$(s,t)$ then,
\begin{enumerate}
\item If $d(s)=0$, then $s$ is in $Q\cup\Delta$,
\item If $0<d(s)\leq k$, then $s$ is of the form $((p,\ell,q),i)$, with $(p,\ell,q)\in\Delta$
  and $1\leq i \leq h^{-1}(\ell)$,
\item If $d(s)=k+1$, then  $s$ is of the form $(q,i)$ with 
$q\in I\cap F^c$,
\item If $d(s)=k+2$, then  $s$ is of the form $(q,i)$ with 
$q\in F\cap I^c$,
\item If $d(s)=k+3$, then  $s$ is of the form $(q,i)$ with 
$q\in F\cap I$.
\end{enumerate}
\end{lemma}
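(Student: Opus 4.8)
The plan is to prove the five items by a direct case analysis on the type of the vertex $s$, the decisive tool being the presence or absence of a self-loop at $s$. The starting observation is that any set $H$ witnessing $d(s)\geq 1$ must satisfy $(s,s)\in H\times H\subseteq E$, so every vertex that belongs to a clique (in the sense of $d$) carries a self-loop. Inspecting the defining families of $E$, the only pairs of the form $(s,s)$ arise from the three complete-graph families on the gadget vertices and from the complete-graph family on the transition-clique vertices, where the indices $i,j$ are allowed to coincide. Hence $s$ carries a self-loop if and only if $s$ is a transition-clique vertex $((p,a,q),i)$ or a gadget vertex $(q,i)$ with $q\in I\cap F^c$, $q\in F\cap I^c$, or $q\in I\cap F$; the states of $Q$ and the transitions of $\Delta$ carry no self-loop.

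First I would settle item~1: if $d(s)=0$ then $s$ has no self-loop, and since the only self-loop-free vertices of $G_\A$ are precisely those of $Q\cup\Delta$, we get $s\in Q\cup\Delta$ (the edge hypothesis $(s,t)\in E$ only serves to restrict attention to vertices that are genuine sources). For the remaining items I would determine, for each of the four kinds of self-loop vertex, the unique maximal clique containing it. A transition-clique vertex $((p,\ell,q),i)$ is joined, in both directions and with self-loops, exactly to the vertices $((p,\ell,q),j)$ for $1\leq j\leq h(\ell)$, which form a clique of size $h(\ell)$; likewise the vertices $(q,i)$ attached to a state $q\in I\cap F^c$ (resp. $F\cap I^c$, resp. $I\cap F$) form a clique of size $k+1$ (resp. $k+2$, resp. $k+3$). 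Since $1\leq h(\ell)\leq k<k+1<k+2<k+3$, once maximality is granted the value of $d(s)$ alone pins down the family of $s$, which is exactly items~2--5; the bound $0\leq d(s)\leq k+3$ established in the preceding lemma shows the five cases are exhaustive.

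The crux, and the only place I expect real work, is to verify that each of these cliques is \emph{maximal}, i.e. cannot be enlarged. The argument is that all edges bridging distinct blocks are single directed edges incident to a self-loop-free vertex: a transition-clique vertex reaches the outside only through the directed edge $(p,\ell,q)\to((p,\ell,q),1)$, with $(p,\ell,q)\in\Delta$ having no self-loop; a gadget vertex reaches the outside only through the edge $q\to(q,1)$, with $q\in Q$ self-loop-free; and distinct transitions (resp. distinct states) yield vertex-disjoint blocks with no edges between them, while transition-clique and gadget vertices are of different form and share no edge. Consequently no vertex lying outside a given block can be bidirectionally adjacent to all of its members (it would in particular need its own self-loop, which the bridging vertices lack), so none of the four cliques can be extended and the sizes computed above are exactly the corresponding $d$-values. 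Assembling these observations gives the five items.
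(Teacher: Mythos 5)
Your proof is correct, and it takes the only natural route: the paper itself disposes of this lemma with the single phrase ``By construction,'' so your case analysis (self-loops force $d(s)=0$ exactly on $Q\cup\Delta$, and the four clique blocks are pairwise non-adjacent and only bridged through self-loop-free vertices, hence maximal with sizes $h(\ell)\leq k$, $k+1$, $k+2$, $k+3$) is precisely the verification the authors leave implicit. No gap; your write-up just makes explicit what the paper asserts by inspection.
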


\begin{proof}
By construction.~\endofproof 
\end{proof}

\begin{lemma}\label{lemma:isom2}
Let $\A$ be a finite automaton. If there is in $G_\A$ an edge of the form
$(s,t)$ with $s\in Q$, then
\begin{enumerate}
\item $d(t)\in \{0,k+1,k+2,k+3\}$ and,
\item If $d(t)=0$, then $t$ is in $\Delta$,
\item If $d(t)=k+1$, then $s\in I\cap F^c$,
\item If $d(t)=k+2$, then $s\in F\cap I^c$,
\item If $d(t)=k+3$, then $s\in F\cap I$.
\end{enumerate}
\end{lemma}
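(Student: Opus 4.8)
The plan is to read the statement directly off the definition of the edge set $E$, using the classification of clique parameters already obtained in Lemma~\ref{lemma:isom1}. The first step is to list which edges of $G_\A$ can have their \emph{first} coordinate in $Q$. Inspecting the five families that define $E$, the only such edges are those of the form $(p,(p,a,q))$ with $(p,a,q)\in\Delta$ and those of the form $(q,(q,1))$ with $q\in I\cup F$. In particular no edge joins two states of $Q$, and no edge runs from a state to a transition-clique vertex $((p,a,q),i)$; hence for any edge $(s,t)$ with $s\in Q$ the target $t$ is either a transition of $\Delta$ or the base vertex $(q,1)$ of the clique gadget attached to the state $s=q\in I\cup F$.

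The second step is a short case analysis on these two possibilities, reading off $d(t)$ from the construction. If $t=(p,a,q)\in\Delta$, then $t$ carries no self-loop and lies in no clique, so $d(t)=0$; this proves that $d(t)=0$ forces $t\in\Delta$. If instead $t=(q,1)$, then $t$ belongs to the complete subgraph built on $\{q\}\times\{1,\dots,k+1\}$, $\{q\}\times\{1,\dots,k+2\}$ or $\{q\}\times\{1,\dots,k+3\}$ according to whether $q\in I\cap F^c$, $q\in F\cap I^c$ or $q\in I\cap F$, whence $d(t)$ equals $k+1$, $k+2$ or $k+3$ respectively. Since here $s=q$, the three membership statements for $s$ follow at once, and collecting the realizable values yields $d(t)\in\{0,k+1,k+2,k+3\}$.

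The step I would be most careful about is checking that $d((q,1))$ equals \emph{exactly} the size of the attached clique and not one more. The vertex $(q,1)$ is also the head of the connector edge $(q,(q,1))$, so one must rule out enlarging the clique by adjoining the state $q$ itself. This is precisely where the absence of self-loops on states enters: since $(q,q)\notin E$, the set $\{q\}\cup H$ never satisfies $(\{q\}\cup H)\times(\{q\}\cup H)\subseteq E$, so no clique containing $(q,1)$ reaches beyond its gadget. Once this is confirmed, the two-case analysis above delivers all five assertions of the lemma.
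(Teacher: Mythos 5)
Your proof is correct and follows the route the paper itself intends: the paper leaves this lemma unproved (the analogous Lemma~\ref{lemma:isom1} is dispatched with ``by construction''), and your argument is exactly that construction-inspection carried out explicitly -- identifying the only two edge families with source in $Q$ and reading off the clique parameter of the target. The care you take to rule out enlarging the state gadget's clique by adjoining $q$ itself (no self-loop on $q$) is a genuine detail the paper glosses over, and it is handled correctly.
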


\begin{proposition}\label{prop:autotograph}
Two finite automata $\A_1$ and $\A_2$ are isomorphic iff $G_{\A_1}$ and
$G_{\A_2}$ are isomorphic too.
\end{proposition}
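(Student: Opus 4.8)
The plan is to show that the graph encoding $\A\mapsto G_\A$ is faithful: an automaton isomorphism induces a graph isomorphism, and conversely every graph isomorphism between $G_{\A_1}$ and $G_{\A_2}$ restricts to an automaton isomorphism. The forward implication is routine. Given an isomorphism $\varphi$ from $\A_1$ to $\A_2$, I would extend it to a map $\Phi$ on the vertices of $G_{\A_1}$ by $\Phi(q)=\varphi(q)$ on the original states, $\Phi((p,a,q))=(\varphi(p),a,\varphi(q))$ on transition vertices, $\Phi(((p,a,q),i))=((\varphi(p),a,\varphi(q)),i)$ on transition gadgets, and $\Phi((q,i))=(\varphi(q),i)$ on state gadgets. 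Because $\varphi$ preserves $\Delta$, $I$ and $F$, it preserves the three families $I\cap F^c$, $F\cap I^c$ and $I\cap F$ together with the letter of each transition (hence each value $h(a)$), so $\Phi$ is a well-defined bijection on vertices carrying every edge-type of $G_{\A_1}$ to the corresponding edge-type of $G_{\A_2}$; thus $G_{\A_1}$ and $G_{\A_2}$ are isomorphic.

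For the converse, I would fix a graph isomorphism $\Psi$ from $G_{\A_1}$ to $G_{\A_2}$ and reconstruct the automaton isomorphism from it. The central observation is that a graph isomorphism preserves cliques, hence preserves the invariant $d(s)$, so $\Psi$ sends each vertex to one of the same $d$-value. By Lemma~\ref{lemma:isom1} this already sorts the vertices into their structural classes: the transition gadgets with $0<d\le k$ and the three state-gadget families with $d=k+1$, $d=k+2$, $d=k+3$ are each mapped onto the analogous family, while the states $Q$ and the transition vertices $\Delta$ together form the class $d=0$. The one point that needs care — and the main obstacle of the proof — is separating states from transition vertices inside this $d=0$ class. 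I would do this using the edge structure recorded in Lemma~\ref{lemma:isom2}: a transition vertex always has exactly one out-neighbour in the transition-gadget class $0<d\le k$ (namely $((p,a,q),1)$), whereas a state never points to a vertex with $0<d\le k$. Since $\Psi$ preserves both $d$ and the orientation of edges, it must map states to states and transition vertices to transition vertices.

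With this separation in hand, let $\varphi$ be the restriction of $\Psi$ to $Q$; it is a bijection from the states of $\A_1$ onto those of $\A_2$, and it remains to check that it is an automaton isomorphism. For transitions, I would use that a transition vertex $(p,a,q)$ has a unique in-neighbour (its source $p$) and a unique $d=0$ out-neighbour (its target $q$) among states, while the size of its attached clique is exactly $h(a)$; since $\Psi$ preserves in- and out-neighbours and clique sizes, and $h$ is a fixed bijection on the common alphabet, $(p,a,q)\in\Delta$ forces $(\varphi(p),a,\varphi(q))$ to be a transition of $\A_2$, and applying the same argument to $\Psi^{-1}$ yields the converse, so $\varphi$ preserves $\Delta$ together with its letters. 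For the initial and final sets, the status of a state $q$ is read off from the gadget it points to: $q$ has an out-neighbour of $d$-value $k+1$, $k+2$ or $k+3$ precisely when $q\in I\cap F^c$, $q\in F\cap I^c$ or $q\in I\cap F$ respectively, and no such out-neighbour when $q\notin I\cup F$. As $\Psi$ preserves these $d$-values and edges, $\varphi$ preserves $I$ and $F$. Hence $\varphi(\A_1)=\A_2$, which establishes the equivalence.
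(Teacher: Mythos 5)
Your proof is correct and follows essentially the same route as the paper: both directions rest on the clique invariant $d$, the classification in Lemmas~\ref{lemma:isom1} and~\ref{lemma:isom2}, restriction of the graph isomorphism to $Q$, and recovery of the letter of a transition from the size $h(a)$ of its attached clique. If anything, you are more careful than the paper at one point — you explicitly justify why, inside the $d=0$ class, states cannot be exchanged with transition vertices (a transition vertex has an out-neighbour with $0<d\leq k$, a state never does), a step the paper's proof passes over silently.
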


\begin{proof}
By construction if $\A_1$ and $\A_2$ are isomorphic, then  $G_{\A_1}$ and
$G_{\A_2}$ are isomorphic too.

Now let $\A_1=(Q_1,A,\Delta_1,I_1,F_1)$ and $\A_2=(Q_2,A,\Delta_2,I_2,F_2)$
be two NFAs such that $G_{\A_1}$ and $G_{\A_2}$ are isomorphic. One can note
that for every vertex $s$ of $G_{\A_1}$, $d(\varphi(s))=d(s)$. It follows
that, $d(s)=0$ iff $d(\varphi(s))=0$. Therefore, by Lemma~\ref{lemma:isom1},
$\varphi$ induces a bijective map from $Q_1$ to $Q_2$. In the following we prove that  
the restriction of $\varphi$ to $Q_1$ 
is an isomorphism from $Q_1$ to $Q_2$.

\begin{itemize}
\item If $q\in I_1\cap F_1^c$, then $d((q,1))=k+1$. Therefore
  $d(\varphi((q,1)))=k+1$. Since $(q,(q,1))$ is an edge of $G_{\A_1}$,
  $(\varphi(q),\varphi((q,1)))$ is an edge of $G_{\A_2}$. Using
  the Assertion 3. of Lemma~\ref{lemma:isom2}, $\varphi(q)\in I_2\cap F_2^c$.
\item If $q\in I_1^c\cap F_1$, then $d((q,1))=k+2$. Therefore
  $d(\varphi((q,1)))=k+2$. Since $(q,(q,1))$ is an edge of $G_{\A_1}$,
  $(\varphi(q),\varphi((q,1)))$ is an edge of $G_{\A_2}$. Using the
  Assertion 4. of Lemma~\ref{lemma:isom2}, $\varphi(q)\in I_2\cap F_2^c$.
\item If $q\in I_1\cap F_1$, then $d((q,1))=k+3$. Therefore
  $d(\varphi((q,1)))=k+3$. Since $(q,(q,1))$ is an edge of $G_{\A_1}$,
  $(\varphi(q),\varphi((q,1)))$ is an edge of $G_{\A_2}$. Using the
  Assertion 5. of Lemma~\ref{lemma:isom2}, $\varphi(q)\in I_2\cap F_2^c$.
\item If $(p,a,q)\in \Delta_1$, then $d((p,a,q))=0)$. 
 Consequently  $d(\varphi((p,a,q)))=0)$.
 Since $(p,(p,a,q))$ is an edge in $G_{\A_1}$, 
 $(\varphi(p),\varphi((p,a,q)))$ is an edge in $G_{\A_2}$.
By Assertion 2. of Lemma~\ref{lemma:isom2}, $\varphi((p,a,q))\in\Delta_2$.
Set $\varphi((p,a,q))=(s,b,t)$, with $s,t\in Q_2$. The only ongoing
edge in $(s,b,t)$ in $G_{\A_2}$ is $(s,(s,b,t))$. Since $\varphi$ is an
isomorphism, $(\varphi^{-1}(s),(p,a,q))$ is an edge of $G_{\A_1}$. It
follows that $\varphi^{-1}(s)=p$. There are two outgoing edges from 
$(s,b,t)$ in $G_{\A_2}$: $((s,b,t),(1,(s,b,t))$ and  $((s,b,t),t)$.
The two outgoing edges from $(p,a,q)$ in $G_{\A_1}$ are 
$((p,a,q),1,(p,a,q))$ and $((p,a,q),q)$. 
 Since $d(q)=0$, $d(t)=0$, $d((1,(p,a,q)))=h(a)$ and $d(((s,b,t),t))=h(b)$,
 one necessarily has $\varphi(q)=t$ and $h(a)=h(b)$ (and therefore $a=b$).
In conclusion, we proved that if  $(p,a,q)\in \Delta_1$, then 
$(\varphi(p),a,\varphi(q))\in \Delta_2$. The same proof can be made using
$\varphi^{-1}$, proving the proposition.
\end{itemize}
\endofproof 
\end{proof}

\begin{proposition}\label{prop:convertion}
The size of $G_\A$ is polynomial in the size of $\A$. Moreover, if there is
at most $m$ outgoing transitions from a state of $\A$, the  degree of $G_\A$ is
bounded by $\max\{m+1,k+3\}$.
\end{proposition}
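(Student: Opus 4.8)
The plan is to prove the two assertions by direct accounting over the vertex and edge families listed in the definition of $G_\A=(V,E)$; there is no idea beyond careful bookkeeping, so the work lies in making the case analysis exhaustive.

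For the size, I would partition $V$ exactly as in its defining union. There are $|Q|$ state vertices and $|\Delta|$ transition vertices; the status gadgets attached to states contribute at most $(k+3)|Q|$ vertices in all, since each state carries at most $k+3$ auxiliary vertices; and the transition gadgets contribute $\sum_{(p,a,q)\in\Delta}h(a)\le k|\Delta|$ vertices, as $h(a)\le k$. Hence $|V|\le |Q|+|\Delta|+(k+3)|Q|+k|\Delta|=O\big((|Q|+|\Delta|)\,k\big)$. Applying the same partition to $E$ gives a bounded number of skeleton edges per state and per transition (the families $(p,(p,a,q))$, $((p,a,q),q)$, $((p,a,q),((p,a,q),1))$ and $(q,(q,1))$), at most $h(a)^2\le k^2$ edges inside each transition clique, and at most $(k+3)^2$ edges inside each status clique, whence $|E|=O\big((|Q|+|\Delta|)\,k^2\big)$. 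Since $|Q|$, $|\Delta|$ and $k=|\Sigma|$ are all at most the size of $\A$, both counts are polynomial in that size.

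For the degree bound I would go through the six families of vertices in the definition of $V$ and count the edges each emits. A state vertex $q$ emits one edge $(q,(p,a,q))$ per outgoing transition, hence at most $m$, plus at most one edge $(q,(q,1))$ when $q\in I\cup F$: at most $m+1$. A transition vertex $(p,a,q)$ emits exactly $((p,a,q),q)$ and $((p,a,q),((p,a,q),1))$. A transition-clique vertex $((p,a,q),i)$ emits $h(a)\le k$ edges. A status-clique vertex emits $k+1$, $k+2$ or $k+3$ edges according to whether its state lies in $I\cap F^c$, $F\cap I^c$ or $I\cap F$, the largest value $k+3$ coming from the clique on $k+3$ vertices once the self-loop is counted (which is exactly what fixes the bound at $k+3$ rather than $k+2$). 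The maximum over the six families is $\max\{m+1,\,k+3\}$.

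The point I would single out as the crux is the notion of degree. What the construction keeps uniformly small is the \emph{out}-degree: every gadget is built so that its internal vertices send back only $O(k)$ edges, and a state emits only its $\le m$ transition edges and one status edge. The in-degree of a state $q$, by contrast, counts one edge $((p,a,q),q)$ for each transition entering $q$, so it equals the in-degree of $q$ in $\A$, which the classes $\N_m(n)$ and $\N_m'(n)$ do not bound. Thus $\max\{m+1,k+3\}$ should be read as a bound on the out-degree, and this is the quantity that must be passed to Theorem~\ref{thm:luks}; I would state it in those terms so that the reduction in the proof of Theorem~\ref{thm-isom} is anchored on the right parameter.
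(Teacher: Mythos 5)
Your proof is correct and follows essentially the same route as the paper's: the paper likewise disposes of the size bound with a one-line ``by construction'' and then runs the identical case analysis over the vertex families, obtaining at most $m$ or $m+1$ for state vertices, $2$ for transition vertices, at most $k$ for transition-clique vertices and at most $k+3$ for status-clique vertices. Your closing observation that only the \emph{out}-degree is bounded (the in-degree of a state vertex in $G_\A$ equals its in-degree in $\A$, which $\N_m(n)$ and $\N_m^\prime(n)$ do not control) is accurate and flags a real imprecision in the paper, whose proof also counts only outgoing edges; this distinction matters when the bound is handed to Theorem~\ref{thm:luks}.
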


\begin{proof}
By construction, the size of $G_\A$ is polynomial in the size of $\A$.
Let $s$ be a vertex of $G_\A$. The following cases arise:
\begin{itemize}
\item If $s\in Q\cap I^c \cap F^c$, then the edges in $G_A$ starting from
  $s$ are all of the form $(s,t)$ where $t\in\Delta$ starts from $s$ (as a transition in $\A$).\\
  Therefore there are at most $m$ outgoing edges from $s$.

\item  If $s\in  I\cup F$, then 
 the edges in $G_A$ starting from $s$ are
  those of the form   $(s,t)$ where $t\in\Delta$ starts from $s$ (as a transition in $\A$), and the one from $s$ to $(s,(s,1))$.\\ Therefore there are at most $m+1$ outgoing
  edges from $s$. 

\item If $s\in \Delta$, then there are two outgoing edges from $s$.

\item If $s=(q,i)$, with $q\in Q$, then there are at most $k+3$ outgoing
  edges from $(q,i)$.
\item If $s=(t,i)$, with $t\in \Delta$, then there are at most $k$ outgoing
  edges from $(t,i)$, depending on the letter labeling $t$, proving the
  Proposition. 
\end{itemize}
\endofproof 
\end{proof}

Theorem~\ref{thm-isom} is a consequence of
Proposition~\ref{prop:convertion}, Proposition~\ref{prop:autotograph}
and Theorem~\ref{thm:luks}.

\section{Practical Approach using Labelings}\label{sec:label}

\begin{figure}[!ht]
\begin{center}
\begin{tikzpicture}[scale=1.2]
\node [] (A) at (0,0) {$(Q_1,\Sigma,\Delta_1,I_1,F_1)$}; 
\node [] (B) at (6,0) {$(Q_2,\Sigma,\Delta_2,I_2,F_2)$}; 
\path[->,>=latex] (A) edge[above] node {isom $\varphi$?} (B);

\node[draw, ellipse, minimum width=5cm,minimum height=5cm] (IFc1) at (0,-3){};
\node (Q1) at (-1.6,-2.2) {$Q_1$};
\node (Q2) at (7.6,-2.2) {$Q_2$};

\node[draw, ellipse, minimum width=5cm,minimum height=5cm] (IFc1) at (6,-3){};

\node[draw, ellipse, minimum width=1cm,fill=black!20] (IFc1) at (0,-2) {$I_1\cap F_1^c$};
\node[draw, ellipse, minimum width=1cm,fill=black!20] (IFc2) at (6,-2) {$I_2\cap F_2^c$};

\node[draw, ellipse, minimum width=1cm,fill=black!20] (IF1) at (-1,-3) {$I_1\cap F_1$};
\node[draw, ellipse, minimum width=1cm,fill=black!20] (IF2) at (5,-3) {$I_2\cap F_2$};

\node[draw, ellipse, minimum width=1cm,fill=black!20] (IcF1) at (1,-3) {$I_1^c\cap F_1$};
\node[draw, ellipse, minimum width=1cm,fill=black!20] (IcF2) at (7,-3) {$I_2^c\cap F_2$};

\node[draw, ellipse, minimum width=1cm,fill=black!20] (IcFc1) at (0,-4) {$I_1^c\cap F_1^c$};
\node[draw, ellipse, minimum width=1cm,fill=black!20] (IcFc2) at (6,-4) {$I_2^c\cap F_2^c$};

\path[->,>=latex] (IFc1) edge[above,color=black] node {$\varphi_1$?} (IFc2);
\path[->,>=latex] (IcFc1) edge[below,color=black] node {$\varphi_4$?} (IcFc2);
\path[->,>=latex] (IcF1) edge[above,color=black,out=-20,in=-160,pos=0.3] node {$\varphi_3$?} (IcF2);
\path[->,>=latex] (IF1) edge[below,color=black,out=20,in=160,pos=0.8] node {$\varphi_2$?} (IF2);
\end{tikzpicture}
\end{center}
\caption{Labelings Technique\label{fig:etiquettes}}
\end{figure}
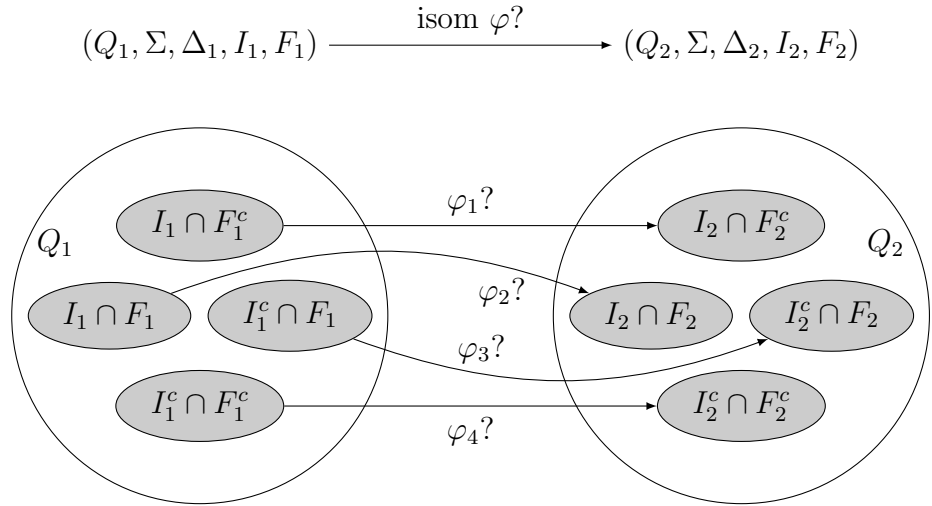

\subsection{Practical Computation using Labelings}\label{sec:labellings}

For testing graph isomorphism (or to
count the number of automorphisms), the most efficient currently used approach is
based on labeling~\cite{gallian} and it works practically for large graphs.
It can be naturally adapted for NFAs. Intuitively, to illustrate the
approach, one can note that if two $n$-state automata are isomorphic, then
they have the same number of initial states and of final states. Rather than
testing potential $n!$ possible bijections from the automata to point out an
isomorphism, it suffices to test $n_1! n_2! n_3! n_4!$ where $n_1$ is the
number of states that are both initial and final, $n_2$ the number of final
states (that are not initial), $n_3$ the number of initial states (that are
not final), and $n_4$ is the number of states that are neither initial, nor
final. With an optimal distribution, the number of tests falls from $n!$ to
$[(n/4)!]^4$. Of course, if all states are both initial and final, there are
still $n!$ bijection to test. This idea is illustrated in Fig.~\ref{fig:etiquettes}.

\begin{figure}[!ht]
{\bf Input:} $\A$, a $n$-state automaton, $\tau$ a labelling with image
$D=\{\alpha_1,\ldots,\alpha_\ell\}$. \\
{\bf Output:} $|\Aut(\A)|$
\smallskip

$res=0$\\
{\bf For} $\alpha\in D$\\
\sp  $C[\alpha]=\emptyset$\\
{\bf EndFor}\\
{\bf For} $i\in \{1,\ldots,n\}$\\
\sp $C[\tau(\A,i)]=C[\tau(\A,i)]\cup\{i\}$\\
{\bf EndFor}\\
{\bf Foreach} permutation $\sigma_1$ of $\tau^{-1}(\alpha_1)$\\
\sp {\bf Foreach} permutation $\sigma_2$ of $\tau^{-1}(\alpha_2)$\\
\sp\sp $\vdots$\\
\sp\sp {\bf Foreach} permutation $\sigma_\ell$ of $\tau^{-1}(\alpha_\ell)$\\
\sp\sp\sp {\bf If} $\sigma=\sigma_1\ldots\sigma_\ell\in \Aut(\A)$, {\bf
  Then}\\
\sp\sp\sp\sp $res=res+1$\\
\sp\sp\sp {\bf EndIf} \\
\sp\sp {\bf EndForeach}\\
\sp\sp $\vdots$\\
\sp {\bf EndForeach}\\
{\bf EndForeach}\\
{\bf Return} $res$
\caption{Counting automorphisms using labelings}\label{algo:count}
\end{figure}

This idea can be generalized by the notion of labeling; the goal is
to point out easily computable criteria that are stable by isomorphism to
get a partition of the set of states and to reduce the search. The approach can
be directly adapted for finite automata. A {\it labeling} is a computable
function $\tau$ from $\N(n)\times \{1,\ldots,n\}$ into a finite set $D$,
such that for $\A_1=(Q,\Sigma,E_1,I_1,F_1)$ and
$\A_2=(Q,\Sigma,E_2,I_2,F_2)$, if $\varphi$ is an isomorphism from $\A_1$ to
$\A_2$, then, for every $i\in \{1,\ldots,n\}$,
$\tau(\A_1,i)=\tau(\A_2,\varphi(i))$. The algorithm consists in looking for
functions $\varphi$ preserving $\tau$. If there exists $\alpha\in D$ such
that $|\{i \mid \tau(\A_1,i)=\alpha\}|\neq
|\{i \mid \tau(\A_2,i)=\alpha\}|$, then the two automata are not isomorphic.
Otherwise, all possible bijections preserving the labeling are tested. In
the worst case, there are $n!$ possibilities (the labeling doesn't provide
any refinement), but in practice, it works very well. Note that if $\tau_1$
and $\tau_2$ are two  labelings, then $\tau=(\tau_1,\tau_2)$ is a labeling to,
allowing the combination of labeling. In our work, we use the following
Labelings: the labeling testing whether a state is initial, the one
testing whether a state is final, the one testing whether a state is both
initial and final, the one returning, for each letter $a$, the number of
outgoing transitions labeled by $a$, the similar one with ongoing
transitions, the one returning the minimal word (in the lexical order) from
the state to a final state and the one returning the minimal word (in the
lexical order) from an initial state to the given state. 

Note that if the set of states is portioned into $p$ classes with $n/p$
elements en each classes, then one has to performed 
$t_n=((n/p)!)^p$ rahter than $n!$. A direct application of Stirling formula
show that 
$$\frac{n!}{\left((n/p)!\right)^p}\sim \sqrt{\frac{p^p}{(2n\pi)^{p-1}}} \cdot p^{n},$$
pointing out a significant theoretic complexity improvement. 

Using these Labelings the practical computation of the sizes of automorphism
groups can be done quite efficiently, using the algorithm depicted in
Fig.~\ref{algo:count}: first the set of states of the automaton is
partitioned into several subclasses according $\tau$. Since an automorphism
has to preserve these classes, one only explore this kind of automorphisms.
Note that if $D$ is large, the algorithm can be easily adapted to work on
$\alpha$'s such that $C[\alpha]\neq 0$.

We have computed sizes of automorphism using Markov chains. Computation is
very fast: labelings approaches allows to provides partitions of states
into subsets which are mostly singletons. Table~\ref{fig:nbisom1} reports
the results which are obtain, for each line, in less than a second. One can
also notice that most of generated automata have a small (compared to $n!$)
automorphism group.  

\begin{table}[!ht]
\begin{center}
\begin{tabular}{|l|c|c|c|}
\hline Class & av. size & maximal size\\
\hline
$\N_2(5)$ & 1.023 & 6\\ \hline
$\N_2(8)$ & 1.012 & 6\\ \hline
$\N_2(10)$& 1.015 & 2\\\hline
$\N_2(15)$& 1.007 & 2\\ \hline
$\N_2(20)$& 1.001 & 2\\ \hline
$\N_3(5$) & 1.031 & 6\\ \hline
$\N_3(8)$ & 1.015 & 2\\ \hline
$\N_3(10)$& 1.015 & 2\\ \hline
$\N_3(15)$& 1.005 & 2 \\ \hline
$\N_3(20)$& 1 & 1\\ \hline
$\N(5)$   & 1.022 & 2\\ \hline
$\N(8)$   & 1.01 & 2\\\hline
$\N(10)$  & 1.018 & 2\\ \hline
$\N(15)$  & 1.005  &2\\ \hline
$\N(20)$  & 1.002  &2\\ \hline

\end{tabular}
\end{center}
\caption{Sizes of automorphisms group, using a $n^3$ mixing time, 1000 tests
  for each line, $|\Sigma|=2$.}\label{fig:nbisom1}

\end{table}

\subsection{Experiments}\label{sec:xp}

 The experiments have
been done on a personal computer with processor {\tt IntelCore i3-4150
CPU 3.50GHz x 4}, {\tt 7,7 Go} of memory and running on a 64 bits Ubuntu
14.04 OS. The implementation is a non optimized prototype written in Python.

The first experimentation consists in measuring the time required to move into
the Metropolis chains for $\N(n)$ and $\N_m(m)$. Results are reported in
Table~\ref{tab:1}. The labelings used are those described in
Section~\ref{sec:labellings}. These preliminary results show that using a
$2$ or $3$-letter alphabet does not seem to have a significant influence. For
each generation, the $n^3$-th elements of the walk is returned, with an
arbitrary start. Moreover, bounding or not the degree does not seem to be
relevant for the computation time. 
Note that we do not use any optimization: several computations on
labelings may be reused when moving into the chain. Moreover, Python is not an
efficient programming language (compared to C or Java). In practice, for
directed graphs, the isomorphism problem is tractable for large graphs (see
for instance~\cite{DBLP:journals/ivc/FoggiaPSV09}). Note that the number of
moves ($n^3$) is the major factor for the increasing computation time
(relatively to $n$): the average time for moving a single step is multiplied by about
(only) $10$ from $n=20$ to $n=90$.

\begin{table}[!ht]
\begin{center}
\begin{tabular}{|c||c|c|c|c|c|}
\hline
$n$ & 10 & 20 & 50 & 70 & 90 \\
\hline
$|A|=2$ & 0.02 & 0.43& 32.5& 166.1& 569.9\\
$|A|=3$ & 0.02 & 0.56& 47.1& 248.4& 848.1\\
\hline 
\end{tabular}
\smallskip

\begin{tabular}{|c||c|c|c|c|c|}
\hline
$n$ & 10 & 20 & 50 & 70 & 90 \\
\hline
$m=2,|A|=2$ & 0.2 & 0.43& 32.5& 166.1& 566.8\\
$m=2,|A|=3$ & 0.2 & 0.57& 47.0& 246.7& 847.2\\
\hline 
$m=3,|A|=2$ & 0.2 & 0.43& 33.0& 167.8& 561.9\\
$m=3,|A|=3$ & 0.2 & 0.57& 47.2& 248.6& 851.3\\
\hline 
\end{tabular}
\end{center}
\smallskip
\caption{Average Time (s) to Sample a NFA in $\N(n)$ (left) and in
$\N_m^\prime(n)$ (right).}\label{tab:1}
\end{table}


For the last experience, we propose to compare our generation for
 $\N_2^\prime(n)^{\bullet}$ with the generator proposed
in~\cite{DBLP:conf/lpar/TabakovV05} with a density of $a$-transitions of $2$ and
$3$. The parameter of the algorithm is a probability $p_f$ for final states
and a density $\sigma$ on $a$-transitions: the set of states of the automaton is
$\{1,\ldots,n\}$, only $1$ is the initial state, each state is final with a
probability $p_f$ and for each $p$ and each $a$, $(p,a,q)$ is a transition with a
probability $\frac{\sigma}{n}$. Therefore for each state and each letter,
the expected number of outgoing transitions labeled by this letter is
$\sigma$. We run this algorithm with $p_f=0.2$ and $\sigma\in\{2,3\}.$ For
each size,
we compute the average size $s$ of the corresponding minimal automata. We use a two
letter alphabet and the average sizes (number of states) are obtained by
sampling 1000 automata for each case. Results are reported in
Table~\ref{tab:2}.


\begin{table}
\begin{center}
\begin{tabular}{|c||c|c|c|c|c|c|}
\hline
$\sigma=2$, $n=$ & 5 & 8 & 11 & 14 & 17 & 20 \\\hline
$s$ & 1.3 & 3.0 & 4.8 & 5.1 & 4.5 & 4.0\\
\hline
$\sigma=3$, $n=$ & 5 & 8 & 11 & 14 & 17 & 20 \\\hline
$s$ & 2.8 & 4.8 &4.7 & 3.8 & 3.4 & 3.0\\\hline
\end{tabular}

\smallskip
\begin{tabular}{|c||c|c|c|c|c|c|}
\hline
$\N_2^\prime(n)^{\bullet}$, $n=$ & 5 & 8 & 11 & 14 & 17 & 20 \\\hline
$s$ & 3.7 & 6.1 & 7.9 & 10.0 & 11.5 & 13.9\\
\hline
\end{tabular}
\end{center}
\caption{Average sizes of deterministic and minimal automata corresponding
to automata sampling using~\cite{DBLP:conf/lpar/TabakovV05} and in
$\N_2^\prime(n)^{\bullet}$.}\label{tab:2}
\end{table}


One can observe that the generator provides quite different automata. With
the Markov chain approach the sizes of the
related minimal automata are greater, even if there is no blow-up in
both cases. 

Finally, also to compare the proposed generator with~\cite{DBLP:conf/lpar/TabakovV05},
we have computed the sizes of the automorphisms groups. Results are
reported in Table~\ref{fig:nbisom1} and in Table~\ref{tab:nbisomvardi} and
points out significant differences.

\begin{table}
\begin{center}
\begin{tabular}{|c|c|c|c|c|c|c|}
\hline
& $p_i=0.2$& $p_i=0.5$ & $p_i=0.8$& $p_i=0.2$ &
$p_i=0.5$ & $p_i=0.8$ \\
& $p_f=0.2$& $p_f=0.5$ & $p_f=0.8$& $p_f=0.2$ &
$p_f=0.5$ & $p_f=0.8$ \\
& $\sigma=2$& $\sigma=2$& $\sigma=2$ & $\sigma=3$ & $\sigma=3$ &
$\sigma=3$\\\hline
n= 5 & 26 & 6 & 25 & 21&6&24\\\hline
n=8 & 2345 & 112 & 2213 & 2254 & 102& 2441\\\hline
n=10& 86 500 & 1343 & 71472 & 83072& 1303& 79203\\\hline
\end{tabular}
\end{center}
\caption{Observed average sizes of automata generated
  by~\cite{DBLP:conf/lpar/TabakovV05}, obtained with 1000 tests on a two
  letter alphabet.}\label{tab:nbisomvardi}
\end{table}

\section{Conclusion}

In this paper we proposed a Markov Chain approach to randomly generate non
deterministic automata (up to isomorphism) for several classes of NFAs. We
showed that moving into these Markov chains can be done quite quickly in
practice and, in some interesting cases, in polynomial time. Experiments
have been performed within a non optimized prototype and, following known
experimental results on group isomorphism, they allow us to think that the
approach can be used on much larger automata. Implementing such techniques
using an efficient programming language is a challenging perspective.
Moreover, the proposed approach is very flexible and can be applied to
various classes of NFAs. An interesting research direction is to design
particular subclasses of NFAs that look like NFAs occurring in practical
applications, even if this last notion is hard to define. We think that the
classes $\N_m^\prime(n)^{\bullet}$ and $\N_m(n)^{\bullet}$ constitute first
attempts in this direction. Theoretically -as often for Monte-Carlo
approach-, computing mixing and strong stationary times are crucial and
difficult questions we plan to investigate more deeply.



\bibliographystyle{elsarticle-num} 

\bibliography{biblio-nondet}

\begin{thebibliography}{10}
\expandafter\ifx\csname url\endcsname\relax
  \def\url#1{\texttt{#1}}\fi
\expandafter\ifx\csname urlprefix\endcsname\relax\def\urlprefix{URL }\fi
\expandafter\ifx\csname href\endcsname\relax
  \def\href#1#2{#2} \def\path#1{#1}\fi

\bibitem{DBLP:journals/tcs/ChamparnaudP05}
J.-M. Champarnaud, T.~Parantho{\"e}n, Random generation of dfas, Theor. Comput.
  Sci. 330~(2) (2005) 221--235.

\bibitem{DBLP:journals/tcs/BassinoN07}
F.~Bassino, C.~Nicaud, Enumeration and random generation of accessible
  automata, Theor. Comput. Sci. 381~(1-3) (2007) 86--104.

\bibitem{DBLP:journals/tcs/AlmeidaMR07}
M.~Almeida, N.~Moreira, R.~Reis, Enumeration and generation with a string
  automata representation, Theor. Comput. Sci. 387~(2) (2007) 93--102.

\bibitem{DBLP:conf/stacs/CarayolN12}
A.~Carayol, C.~Nicaud, Distribution of the number of accessible states in a
  random deterministic automaton, in: STACS 2012, Vol.~14 of LIPIcs, 2012, pp.
  194--205.

\bibitem{DBLP:conf/mfcs/Nicaud14}
C.~Nicaud, Random deterministic automata, in: E.~Csuhaj{-}Varj{\'{u}},
  M.~Dietzfelbinger, Z.~{\'{E}}sik (Eds.), {MFCS}'14, Vol. 8634 of Lecture
  Notes in Computer Science, Springer, 2014, pp. 5--23.

\bibitem{DBLP:conf/lpar/TabakovV05}
D.~Tabakov, M.~Y. Vardi, Experimental evaluation of classical automata
  constructions, in: G.~Sutcliffe, A.~Voronkov (Eds.), {LPAR}'05,, Vol. 3835 of
  Lecture Notes in Computer Science, Springer, 2005, pp. 396--411.

\bibitem{DBLP:conf/dcfs/ChamparnaudHPZ02}
J.~Champarnaud, G.~Hansel, T.~Parantho{\"{e}}n, D.~Ziadi, Nfas bitstream-based
  random generation, in: Fourth International Workshop on Descriptional
  Complexity of Formal Systems - {DCFS} 2002, 2002, pp. 81--94.

\bibitem{DBLP:conf/lata/Nicaud09}
C.~Nicaud, On the average size of glushkov's automata, in: Language and
  Automata Theory and Applications, Third International Conference, {LATA}
  2009,, Lecture Notes in Computer Science, 2009, pp. 626--637.

\bibitem{DBLP:conf/fsttcs/NicaudPR10}
C.~Nicaud, C.~Pivoteau, B.~Razet, Average analysis of glushkov automata under a
  bst-like model, in: {IARCS} Annual Conference on Foundations of Software
  Technology and Theoretical Computer Science, {FSTTCS} 2010, LIPIcs, 2010, pp.
  388--399.

\bibitem{DBLP:conf/wia/CarninoF11}
V.~Carnino, S.~D. Felice, Random generation of deterministic acyclic automata
  using markov chains, in: CIAA 2011, Vol. 6807 of Lecture Notes in Computer
  Science, 2011, pp. 65--75.

\bibitem{DBLP:journals/tcs/CarninoF12}
V.~Carnino, S.~D. Felice, Sampling different kinds of acyclic automata using
  markov chains, Theor. Comput. Sci. 450 (2012) 31--42.

\bibitem{Hopcroft}
J.~Hopcroft, J.~Ullman, Introduction to Automata Theory, Languages and
  Computation, Addison-Wesley, 1979.

\bibitem{DBLP:journals/siamcomp/Booth78}
K.~S. Booth, Isomorphism testing for graphs, semigroups, and finite automata
  are polynomially equivalent problems, {SIAM} J. Comput. 7~(3) (1978)
  273--279.

\bibitem{DBLP:journals/jcss/Luks82}
E.~M. Luks, Isomorphism of graphs of bounded valence can be tested in
  polynomial time, J. Comput. Syst. Sci. 25~(1) (1982) 42--65.

\bibitem{mixing}
Y.~P. D.A.~Levin, E.~L. Wilmer, Markov Chain and Mixing Times, American
  Mathematical Society, 2008,
  \url{http://pages.uoregon.edu/dlevin/MARKOV/markovmixing.pdf}.

\bibitem{proba-and-computing}
M.~Mitzenmacher, E.~Upfal, Probability and Computing, Cambridge University
  Press, 2005.

\bibitem{Chib}
S.~Chib, E.~Greenberg, Understanding the metropolis-hastings al- gorithm,
  American Statistician 49 (1995) 327--335.

\bibitem{DBLP:journals/ipl/Mathon79}
R.~Mathon, A note on the graph isomorphism counting problem, Inf. Process.
  Lett. 8~(3) (1979) 131--132.

\bibitem{gallian}
J.~A. Gallian, A dynamic survey of graph labeling, The Electronic Journal of
  Combinatorics 17.

\bibitem{DBLP:journals/ivc/FoggiaPSV09}
P.~Foggia, G.~Percannella, C.~Sansone, M.~Vento, Benchmarking graph-based
  clustering algorithms, Image Vision Comput. 27~(7) (2009) 979--988.

\end{thebibliography}

\end{document}